\documentclass{article}

\usepackage{amsfonts,amsmath,amssymb}

\usepackage{enumerate}
\setlength{\abovedisplayskip}{1pt}
\setlength{\belowdisplayskip}{1pt}
\usepackage{times}
\usepackage{bm}
\usepackage{natbib}
\usepackage[percent]{overpic}
\usepackage{float}
\usepackage{graphicx}

\usepackage{subcaption}

\usepackage{lipsum}
\usepackage{bm}
\usepackage{natbib}
\usepackage[utf8]{inputenc}
\usepackage{amssymb}
\usepackage[justification=centering]{caption}
\usepackage{mathrsfs}

\setlength{\jot}{2.1pt}
\usepackage{graphicx}
\usepackage{mathtools}
\usepackage{booktabs}
\usepackage{amstext} %
\usepackage{titling}
\usepackage{epstopdf}
\usepackage{array} 
\usepackage{booktabs}
\usepackage{caption}
\usepackage{epstopdf}
\usepackage{xcolor}
\usepackage{comment}

\addtolength\topmargin{35pt}

\DeclareMathOperator{\tr}{tr}
\DeclareMathOperator{\cov}{cov}
\DeclareMathOperator{\var}{var}
\DeclareMathOperator{\pr}{pr}
\DeclareMathOperator{\adj}{adj}
\DeclareMathOperator{\diag}{diag}

\newcommand{\calR}{\mathcal R}
\newcommand{\calS}{\mathcal S}

\newcommand{\calL}{\mathcal L}
\newcommand{\calC}{\mathcal C}
\newcommand{\calH}{\mathcal H}

\newcommand{\calF}{\mathcal F}

\newcommand{\bD}{ { D} }

\newcommand{\bg}{ { g} }

\newcommand{\bh}{ { h} }

\newcommand{\bI}{ { I} }

\newcommand{\bK}{ { K} }

\newcommand{\bX}{ { X} }

\newcommand{\bY}{ { Y} }

\newcommand{\boeta}{{\eta}}
\newcommand{\eps}{\epsilon}

\newcommand{\taus}{\tau^2}

\newcommand{\given}{\,|\,}

\newcommand{\bbeta}{ { \beta} }

\newcommand{\beps}{ { \epsilon} }

\newcommand{\bphi}{ { \phi} }
\newcommand{\bPhi}{ { \Phi} }

\newcommand{\bSigma}{ { \Sigma} }

\newcommand{\calU}{{\cal U}}

\sloppy

\usepackage{ulem}

\usepackage{xr}

\usepackage{amsfonts,amsmath,amssymb}

\usepackage{enumerate}
\usepackage{times}
\usepackage{bm}
\usepackage{natbib}
\usepackage[percent]{overpic}
\usepackage{float}
\usepackage{graphicx}

\usepackage{subcaption}

\usepackage{lipsum}
\usepackage{bm}
\usepackage{natbib}
\usepackage[utf8]{inputenc}
\usepackage{amssymb}
\usepackage[justification=centering]{caption}
\usepackage{mathrsfs}
\setlength{\jot}{2.1pt}
\usepackage{graphicx}
\usepackage{mathtools}
\usepackage{booktabs}
\usepackage{amstext} %
\usepackage{epstopdf}
\usepackage{array} 
\usepackage{booktabs}
\usepackage{caption}
\usepackage{epstopdf}
\usepackage{xcolor}
\usepackage{comment}
\usepackage{authblk}

\usepackage{authblk}

\usepackage{ulem}

\usepackage{xr}
\usepackage{amsthm}
\newtheorem{proposition}{Proposition}
\newtheorem{theorem}{Theorem}
\newtheorem{lemma}{Lemma}

\bibliographystyle{plainnat}
\usepackage[margin=1.25in]{geometry}

\usepackage{setspace}
\doublespacing
\newcommand{\supplementstart}{
    \setcounter{section}{0} 
    \renewcommand{\thesection}{S\arabic{section}} 
    \title{Supplementary materials for ``\maintitle"}
    \maketitle
}

\begin{document}

\newcommand{\maintitle}{Consistency of common spatial estimators under spatial confounding}
\title{\maintitle}

\author[1]{Brian Gilbert}
\author[2]{Elizabeth L. Ogburn}
\author[2]{Abhirup Datta}

\affil[1]{Division of Biostatistics, NYU Grossman School of Medicine}
\affil[2]{Department of Biostatistics, Johns Hopkins University}

\maketitle

\begin{abstract}
   This paper addresses the asymptotic performance of popular spatial regression estimators of the linear effect of an exposure on an outcome under ``spatial confounding" --- the presence of an unmeasured spatially-structured variable influencing both the exposure and the outcome. We first show that the estimators from ordinary least squares (OLS) and restricted spatial regression are asymptotically biased under spatial confounding. We then prove a novel main result on the consistency of the generalized least squares (GLS) estimator using a Gaussian process (GP) working covariance matrix in the presence of spatial confounding under infill (fixed domain) asymptotics.
    The result holds under very general conditions --- for any exposure with some non-spatial variation (noise), for any spatially continuous fixed confounder function, using any Mat\'ern or square exponential kernel used to construct the GLS estimator, and without requiring Gaussianity of errors. Finally, we prove that spatial estimators from GLS, GP regression, and spline models that are consistent under confounding by a fixed function will also be consistent under endogeneity or confounding by a random function, i.e., a stochastic process. We conclude that, contrary to claims in some literature on spatial confounding, traditional spatial estimators are capable of estimating linear exposure effects under spatial confounding as long as there is some noise in the exposure. We support our theoretical arguments with simulation studies.
\end{abstract}

\section{Introduction}\label{sec:intro}

``Spatial confounding" commonly refers to the phenomenon where an unmeasured variable that is ``spatially structured'', i.e., a continuous or `smooth' function of space, influences both the outcome and the exposure variables. Understanding the impact of spatial confounding on popular estimators of the exposure effect in geospatial models has become a very popular topic in spatial and environmental statistics. %
However, the literature has offered conflicting conclusions and confused accounts of the origins of and solutions to spatial confounding, without much asymptotic theory to support the claims. In this paper, we offer consistency (or lack thereof) of popular estimators based on spatial regression models. We focus on the ordinary least squares (OLS) estimator, restricted spatial regression (RSR), and generalized least squares (GLS) estimator using Gaussian process (GP) covariance, but also discuss spline regression and Bayesian GP regression. Our main result is establishing consistency of GLS under very general forms of spatial confounding in fixed-domain (infill) asymptotics, as long as there is some noise (non-spatial variation) in the exposure process. %

One of the earliest discussions of spatial confounding is found in \cite{clayton}, who noted that introducing a spatial error term to a linear model %
can induce large changes in coefficient estimates relative to a non-spatial model. 
They recommend its inclusion anyway since unmeasured confounding is often a strong threat to study validity. \cite{paciorek} defines spatial confounding as the presence of an unmeasured spatially-varying confounder that is correlated with the exposure, derives the finite sample bias of the GLS estimator and studies it analytically and via extensive empirical studies, concluding that traditional spatial estimators like GLS and splines are favorable to OLS estimators in most realistic scenarios of spatial confounding.

In contrast, \cite{reich} and \cite{wakefield} %
view the shift in effect estimates due to adding a spatial error to a linear model as problematic. \cite{hodges} introduce restricted spatial regression (RSR) as a workaround, %
restricting the spatial random function to be geometrically orthogonal to covariates. %
While the authors acknowledge that the existence of an unmeasured spatial confounder implies bias in this estimate, they claim that traditional spatial models cannot adjust for such bias either. Recent papers have built off RSR while maintaining its essential features \citep{hughes, prates, azevedo}. RSR has come under criticism recently. \cite{hanks,khan,zimmerman} found that %
uncertainty quantification from RSR can be severely anticonservative. %
While this line of work makes a convincing case for not using RSR due to its second-order properties (variance), it does not consider data generation processes with spatial confounding and thus does not address the original claim that RSR is likely no more biased than traditional spatial regression models under spatial confounding. 

Some studies have %
considered data generation processes (DGP) with explicit spatial confounding to derive analytical expressions of the biases and mean squared errors of the spatial estimators such as GLS \citep{paciorek,page,schnell,nobre2021effects,khan2,narcisi2024effect}. These finite-sample bias expressions have sometimes led to claims that the GLS estimator is biased. For example, \cite{page} notes the equivalence of spatial confounding and endogeneity and argues that the GLS estimator is biased under endogeneity. \cite{schnell} points to the finite sample bias  of the GLS estimator and claims that common spatial estimators do not mitigate bias under spatial confounding. However, in the presence of unmeasured spatial confounding, no estimator of the exposure effect will be unbiased in finite samples unless making very strong assumptions on the confounder. An important factor in the evaluation of an estimator is asymptotic consistency, which complements analytical or empirical studies of the finite sample bias under various data generation scenarios and sample sizes. In this regard, \cite{khan2} provides a comprehensive study of non-spatial or unadjusted (OLS) and spatial (GLS) estimators under data generation processes with explicit spatial confounding, expanding on the previous work of \cite{paciorek}. They derive analytic expressions for the finite-sample bias of OLS and GLS estimators 
and show that, under most scenarios of spatial confounding, the spatial estimators like GLS will generally reduce bias compared to the OLS estimator. 
\cite{khan2} also offers numerical results from a suite of different data generation scenarios which provide additional empirical evidence in favor of using the GLS estimator under spatial confounding.  
 However, to our knowledge, none of these papers studied how the GLS or OLS estimators will behave asymptotically under spatial confounding. This is the knowledge gap that we fill here.

For our study, we also make a clear distinction between the data generation process (DGP) and the analysis models used to derive estimators of the exposure effect. The DGP for the outcome is specified as the widely studied and used spatial \textit{partially linear model (PLM)}. %
We consider two scenarios: one where the unmeasured confounder is a fixed spatial function and one where it is a random spatial function, i.e., a stochastic process on the spatial domain.  
We show how spatial confounding can be defined from first principles of causal inference and establish conditions for identifiability of the linear exposure effect for data generated from the PLM.  %
We then present a direct first-order argument against estimators like the unadjusted OLS estimator and the RSR estimator by quantifying their non-vanishing asymptotic error under %
spatial confounding. %

Our main result then establishes the consistency of the GLS estimator under spatial confounding as long as there is some non-spatial variation in the exposure. The result holds under very general conditions -- any random sampling scheme of locations in a fixed domain (infill asymptotics), spatial confounding by any continuous fixed function of space, errors for both the outcome and exposure deviating from normality, and the GP covariance function used in GLS being any universal kernel (e.g., Mat\'ern or square exponential) with any choice of kernel parameters. %
The proof uses Mercer representation of the reproducing kernel Hilbert space (RKHS) of Gaussian processes with a universal kernel %
to show that the GLS procedure essentially prewhitens the outcome and the exposure, removing the spatial confounder and allowing recovery of the exposure effect from the non-spatial part of the exposure. Similar ideas of prewhitening have been more explicitly exploited in joint outcome and exposure models to mitigate spatial confounding \citep{thaden,dupont}. We show that the traditional GLS estimator achieves the same goal without explicitly specifying the exposure model. The result thus dispels claims (in an asymptotic sense) that GLS does not mitigate bias under spatial confounding \citep[see, e.g.,][]{hodges,page,schnell}. Our result offers asymptotic confirmation to the empirical and analytical finite-sample conclusions of \cite{paciorek} and \cite{khan2} that in the presence of some non-spatial variation in the exposure, traditional spatial estimators can be used even in the presence of spatial confounding.

Last, we consider spatial confounding by a random function %
of space. This is a form of ``endogeneity,” as omitting the spatial random function (i.e., a random effect) would result in endogenous error terms that are correlated with exposure. %
We show that GLS is also consistent under very general conditions of endogeneity ---   
not requiring any distributional assumptions (e.g., stationarity or Gaussianity) on the confounding stochastic process beyond having continuous sample paths. 
We offer a general result on when the consistency of an estimator under spatial confounding by a fixed function will imply consistency under endogeneity. %
This result is broadly applicable to other estimators like splines and Gaussian processes (GP), 
showing that existing results on their consistency for the PLM under spatial confounding by a fixed function, e.g., \cite{rice,dupont} for splines and  %
\cite{yang} for GP, also imply consistency of these estimators under spatial endogeneity. This result contradicts the perception that estimators from exogenous analysis models like the GLS or GP regression estimator, where error processes of the outcome are modeled as independent of the exposure, cannot account for endogeneity \citep[see e.g.,][]{bell}. %
This seemingly paradoxical result on the consistency of exogenous estimators under endogeneity is due to the spatially smooth nature of the confounder, demonstrating the benefit of spatial confounding as opposed to an unstructured or clustered confounder.
We corroborate the theoretical findings through synthetic experiments. 

\section{Data generation process}\label{sec:plm}
\subsection{Outcome model}\label{sec:dgpy}
Throughout, we will assume the following data generation process (DGP). Let $\calR \subset \mathbb R^d$ be a %
convex and compact
spatial domain. The locations $S_1,\ldots,S_n$ are sampled randomly (i.i.d) from a sampling density $f_s$ with support on the whole of $\calR$. Our asymptotic results will keep the domain $\calR$ fixed, i.e., we consider infill asymptotics with random sampling of an increasing number of locations. For $s \in \calR$, let $Y(s)$ be a univariate outcome and $X(s)=(X_1(s),\ldots,X_p(s))^\top$ be a $p$-dimensional exposure observed at $s$. %
Let $X(\cdot)$ denote the process $\{X(s)\}_{s \in \calS}$ while $X$ denotes the matrix (or vector) formed by vertically stacking $X(S_1)^\top, X(S_2)^\top,\ldots, X(S_n)^\top$. We use the same convention notation for all other fixed functions or random functions  (stochastic processes) on $\calR$, e.g., $Y(\cdot)=\{Y(s)\}_{s \in \calR}$ and $Y=(Y(S_1),\ldots, Y(S_n))^\top$. 
The DGP for the outcome process is given by a \textit{partially linear model (PLM)}  %
\begin{equation}\label{eq:plm}
Y(s) = \bX(s)^\top \bbeta^* + g(s) + \eps(s), \eps(s) \sim_{i.i.d}, E(\eps(s)) = 0, \var(\eps(s)) < \infty, \eps(\cdot) \perp X(\cdot). 
\end{equation}

This PLM features extensively in spatial analysis and study of spatial confounding, both as the data generation model and as the analysis model. The spatial function $g(\cdot): \calR \to \mathbb R$ %
can either be a fixed function or a random function; %
the empirical properties of a single draw cannot determine which is the case. Some of the literature on spatial confounding decides on the nature of the true $g(\cdot)$ based on the analysis model used. For example, if $g(\cdot)$ is modeled using splines or basis functions and estimated using penalized regression, the true $g(\cdot)$ is also assumed to be fixed \citep{rice,dupont}. When $g(\cdot)$ is modeled as a random function, typically as a Gaussian process (GP), the true function $g(\cdot)$ is also assumed to be a random function \citep{hodges,khan,zimmerman}. This latter practice can be problematic: treating the true $g(\cdot)$ as a GP, which is traditionally modeled to be independent of the exposure $X(\cdot)$, ignores their possible true correlation in the DGP and led to overlooking the possibility of bias of certain estimators under spatial confounding. We correct these misapprehensions in Section \ref{sec:bias}.

It is thus important to disentangle the DGP from the analysis model. %
Regardless of the analysis model used, we view it as preferable to treat the true $g(\cdot)$ in the DGP as a \textit{fixed} function of space as we are typically interested in inference about a population that would share the realization of $g(\cdot)$ that generated the data. Note that assuming $g(\cdot)$ to be a fixed function in the DGP is perfectly compatible with an analysis model that models $g(\cdot)$ as a random function like a Gaussian process or a spline or basis function expansion with priors for the coefficients; this is analogous to a Bayesian model that treats a fixed parameter as random during the analysis. %
Occasionally, there can be compelling reasons to consider a random function $g(\cdot)$ in the DGP, such as a desire to generalize findings to future, independent generations of the data.  %
In such cases, studying the bias of an estimator requires explicit consideration of the correlation between this random function and the exposure (i.e., endogeneity), as done in \cite{paciorek,page,schnell,nobre2021effects,khan2,narcisi2024effect}. Ignoring this correlation results in erroneous conclusions of unbiasedness. We consider the DGP with confounding by a random spatial function in Section \ref{sec:random}.

\subsection{Spatial confounding}\label{sec:conf}

Spatial confounding occurs in the PLM (\ref{eq:plm}) when the exposure function $X(\cdot)$ is correlated with the fixed spatial function $g(\cdot)$ in the outcome, i.e., 
\begin{equation}\label{eq:confcor}
    \cov(g(S),X(S)) \neq 0, \mbox{ where } S \sim f_s. 
\end{equation}
Condition (\ref{eq:confcor}), established in Section \ref{sec:confder} of the Supplement using potential outcomes, is meaningful even without the causal inference formalism used to derive it. If $g(\cdot)$ is a fixed function, then the correlation in (\ref{eq:confcor}) accounts for randomness in $g(S)$ from the random sampling of the location $S$, i.e., $\cov(g(S),X(S)) = \int_\calR E(X(s)) g(s) f_s(s) ds - \int_\calR E(X(s)) f_s(s) ds \int_\calR g(s) f_s(s) ds$. If $g(\cdot)$ is a random function then randomness in $g(S)$ comes from both $g(\cdot)$ and $S$ (See Section \ref{sec:confder} for the expression). 
As $X(\cdot)$ is $p$-dimensional, for $p > 1$ Equation (\ref{eq:confcor}) is a vector equation, i.e., there is a nonzero correlation between $g(\cdot)$ and at least one component of $X(\cdot)$. 

\subsection{Identifiability and exposure model}\label{sec:uniq}
 
In \eqref{eq:plm}, since both $g(\cdot)$ and (possibly) the exposure $X(\cdot)$ vary with space, additional assumptions are required on the exposure process to ensure that the parameters in \eqref{eq:plm} are identified, in the sense that only one pair $(\beta^*, g(\cdot))$ is compatible with infinite data generated from \eqref{eq:plm}.  
For RSR, \cite{hodges} restricted $g(\cdot)$ to the orthogonal complement ($X(\cdot)^\perp$) of $X(\cdot)$, %
ensuring identifiability. However, as we see from (\ref{eq:confcor}), 
assuming $g(\cdot) \in X(\cdot)^\perp$ implies a DGP without any spatial confounding. %
Below, we provide a necessary and sufficient criterion for identifiability in the PLM that accommodates confounding. %
All proofs are in the Supplement. \\

\begin{proposition}\label{prop:plm_id}
Let $\mathcal{F}$ be a class of functions from $\calR$ to $\mathbb R$ that is closed under pairwise addition and scalar multiplication. Let $x(\cdot)=(x_1(\cdot),\ldots,x_p(\cdot))^\top$ %
denote a draw (realization) of $X(\cdot)$.
Then, given $X(\cdot)=x(\cdot)$, the partially linear model (\ref{eq:plm}) has a unique solution in $\mathbb R^p \times \calF$, in the sense that there is only one pair $(\beta^*,g(\cdot))  \in \mathbb R^p \times \calF$ for which the data have the distribution given in equation $(\ref{eq:plm})$ for all $s \in 
\calR$, if and only if  %
$x(\cdot)^\top a \notin \mathcal{F}$ for all nonzero $a \in \mathbb R^p$.\\ %
\end{proposition}

The PLM is thus identified when the confounder $g(\cdot)$ belongs to the class $\calF$, and the exposure $X(\cdot)$ (or, in the multivariate case, any linear combination of its components) does not belong to $\calF$. As function classes are typically defined by smoothness (e.g., Hölder smoothness) and smoother classes are nested inside less smooth ones, this implies that the exposure needs to have a non-smooth component relative to $\calF$. The identifiability result thus agrees with earlier literature suggesting the need for the exposure $X(\cdot)$ to vary on a smaller spatial scale than the confounder $g(\cdot)$ to identify the exposure effect. We note that \textit{spatial scale} has been defined and interpreted in different ways in the literature and is often tied to specific models. \cite{paciorek} and \cite{khan2} define it via the range parameters of the Gaussian processes assumed to model the exposure and the confounder. \cite{schnell} defines it through the parameters of the conditional autoregressive models for $X(\cdot)$ and $g(\cdot)$. \cite{guan} and \cite{keller} define spatial scales in terms of the coefficients of Fourier or other basis expansions of the two functions. More generally, \cite{schnell} and \cite{gilbert} discuss how differences in the spatial scales of the exposure and confounder are related to the \textit{positivity} assumption in causal inference. Proposition \ref{prop:plm_id} does not use such specific distributional assumptions and defines spatial scale through the smoothness of the function class $\calF$ for the confounder $g(\cdot)$. If $X(\cdot) \notin \calF$, then it varies at finer scales than $g(\cdot)$. 
Note that for Proposition \ref{prop:plm_id}, the exposure $X(\cdot)$ need not be restricted to the orthogonal complement of $\calF$, as in RSR, but can be correlated with $g(\cdot)$, thereby accommodating spatial confounding. 

Proposition \ref{prop:plm_id} motivates considering the following data generation process for the exposure:
\begin{equation}\label{eq:exp}
    X(s) = h(s) + \eta(s),\, \eta(s) \sim_{i.i.d}, E(\eta(s)) = 0, O \prec \var(\eta(s)) \prec \infty, \eta(\cdot) \perp \eps(\cdot), 
\end{equation}
where $h(\cdot)=(h_1(\cdot),\ldots,h_p(\cdot))^\top$, each $h_j(\cdot)$ being some fixed function of space. 
\cite{yang} assumes a similar DGP as (\ref{eq:exp}) for the exposure in their study of Bayesian Gaussian process estimators for the PLM (\ref{eq:plm}). The DGP is quite general, allowing $X(\cdot)$ to have both a spatial component $h(\cdot)$ correlation of which with $g(\cdot)$ effectuates spatial confounding, and a non-spatial noise component $\eta(s)=(\eta_1(s),\ldots,\eta_p(s))^\top$ with zero mean (a nonzero mean can be absorbed into $h(\cdot)$ as a constant function of space, i.e., an intercept) and a strictly positive-definite variance. The latter precludes the scenario that any linear combination $\eta(s)^\top a$ is degenerate for any non-zero $a \in \mathbb R^p$, ensuring that sample paths (draws) of the i.i.d error process $\eta(\cdot)^\top a$ are almost surely of zero smoothness. This, in turn, guarantees that no linear combination $X(\cdot)^\top a$ belongs to any smooth function class $\calF$ considered for $g(\cdot)$, implying identifiability from Proposition \ref{prop:plm_id}.

\section{Analysis models and the consistency of their estimators}\label{sec:consist}
\subsection{Analysis models}\label{sec:anmodel}
We consider various \textit {analysis models} for estimating the effect of exposure $X(s)$ on the outcome $Y(s)$ in the PLM (\ref{eq:plm}). Akin to \cite{khan2}, we distinguish between the data-generation process (DGP), specified in (\ref{eq:plm}) and (\ref{eq:exp}), and the analysis models used because, crucially, they need not coincide: in some cases an analysis model may result in consistent estimation \textit{even if} it makes modeling assumptions that conflict with the DGP. Recognizing this distinction is crucial to unraveling the confusion around spatial confounding.
 The OLS estimator of $\beta^*$ is 
     $\hat\beta_{OLS} = (  X^\top  X)^{-1} X^\top Y$. 
 The OLS estimator arises from an analysis model that assumes that $g(\cdot)$ is absent in \eqref{eq:plm}. It is thus often referred to as the \textit{unadjusted} estimator, as it does not attempt to adjust for spatial confounding. Common estimators that adjust for a spatial component in the outcome model include the estimator of $\beta$ from a \textit{spline regression} \citep{rice} which models $g(\cdot)$ as a fixed function of space, represented using splines. Alternatively, $g(\cdot)$ is also commonly modeled as a draw from a random function, i.e., a spatial stochastic process  %
$\nu(\cdot)$ endowed with a Gaussian process (GP) prior with zero mean and some covariance function $K(\cdot,\cdot)$. %
The GP regression analysis model can be summarized as a hierarchical spatial mixed effects linear model:
\begin{equation}\label{eq:gp}
    Y(s) =  {X(s)^\top} \beta + \nu(s) + \epsilon(s), \nu(\cdot) \sim GP(0,K(\cdot,\cdot)), \epsilon(s) \sim_{i.i.d} N(0,\taus).
\end{equation} %
If $\beta$ is also assigned a prior, then Bayesian (MCMC) algorithms are typically used to estimate $\beta$ and $\nu$. %
Alternatively, the GLS estimator  arises as the maximum likelihood estimator of $\beta$ from the marginal model $ Y = N(X \beta, \Sigma)$ obtained from integrating over the GP prior of $\nu$ in (\ref{eq:gp}), i.e., 
\begin{equation}\label{eq:gls}
    \hat\beta_{GLS} = (X^\top\Sigma^{-1}X)X^\top\Sigma^{-1}Y,\,  \Sigma = K + \tau^2 I \mbox{ with } K_{ij} = K(S_i,S_j).
\end{equation}
We reemphasize that (\ref{eq:gp}) is an analysis model here used to derive the GLS estimator. It is not suitable for being a DGP as the random function $\nu(\cdot)$, modeled to be independent of the exposure $X(\cdot)$, precludes any mechanism of spatial confounding.
 The RSR analysis model is similar to the mixed effects model (\ref{eq:gp}) with the additional restriction that %
 $\nu(\cdot)$ is geometrically orthogonal to $X(\cdot)$, or in practice, $\nu^\top X = 0$ where $\nu=(\nu(S_1),\ldots,\nu(S_n))^\top$. This geometric orthogonality is different from prior specification of statistical independence between $\nu(\cdot)$ and $X(\cdot)$ specified in (\ref{eq:gp}) which does not guarantee geometric orthogonality between sample paths of $X(\cdot)$ and the posterior estimate of $\nu(\cdot)$. The hard constraint of explicit orthogonality imposed by RSR results in the same maximum likelihood estimator for $\beta^*$ as OLS, i.e., $\hat \beta_{RSR} = \hat \beta_{OLS}$. 

\subsection{Bias of OLS and RSR under spatial confounding}\label{sec:bias}

We quantify the asymptotic bias of the unadjusted OLS  estimator ${\hat \beta_{OLS}}$ and equivalently, the RSR estimator (as $\hat \beta_{OLS}=\hat \beta_{RSR}$) under spatial confounding.\\ 
\begin{proposition}\label{prop:ols}
    Consider locations $S_1,\ldots,S_n \sim_{i.i.d} f_s$, %
    $Y(S_i)$ generated from (\ref{eq:plm}) with a continuous $g(\cdot)$, $X(S_i)$ are generated from (\ref{eq:exp}) with a continuous $h(\cdot)$ for $i=1,\ldots,n$ such that there is spatial confounding as defined in (\ref{eq:confcor}). %
    Then, the unadjusted OLS estimator or RSR estimator $\hat \beta_{OLS}$ satisfies $\hat\beta_{OLS} - \beta^* \to_p \var(X(S))^{-1}\cov(X(S), g(S)) \neq 0$, where $S \sim f_s$.\\ %
\end{proposition}

Proposition \ref{prop:ols} quantifies the bias of the unadjusted OLS or RSR estimator under very general conditions of spatial confounding (requiring no assumptions on the functions $g(\cdot)$ and $h(\cdot)$ beyond continuity and satisfying the spatial confounding condition, and no assumptions on the error processes of the exposure and outcome beyond finite second-order moments.

We give the intuition of the proof here. If $g(\cdot)$ were a \textit{known} function, then the adjusted OLS regression of $Y$ onto $X$ with the offset $g =(g(S_1),\ldots,g(S_n))^\top$ would yield an unbiased  estimator of $\beta^*$. We call this \textit{oracle} estimator $\hat  \beta_{oracle} = (X^\top  X)^{-1}X^\top (Y-g) $. Then 
\begin{equation}\label{eq:biasols}
    \hat \beta_{OLS} - \hat  \beta_{oracle} =    (  X^\top  X)^{-1}   X^\top Y -(  X^\top  X)^{-1}   X^\top (Y-g) 
= (  X^\top   X)^{-1}   X^\top g.
\end{equation}
Since $\hat{\beta}_{oracle}$ is unbiased and consistent (as it is an OLS estimate from a correctly specified model), we can immediately see that the non-spatial model is biased under spatial confounding (that is, if $X(\cdot)$ and $g(\cdot)$ are correlated), 
with error $ (X^\top X)^{-1} X^\top  g$. The asymptotic error given in Proposition (\ref{prop:ols}) is the population limit of this error and is nonzero if there is spatial confounding. 
 
 This first-order bias of the OLS or RSR estimator is unsurprising and stems from the well-known issue of omitting a confounding variable in a regression. Recent criticisms of the RSR have mostly focused on its second-order properties, showing that variance of the RSR estimator is anti-conservative \citep{khan,zimmerman}. These studies assumed the mixed effects analysis model (\ref{eq:gp}) as the true DGP, in which case there is no spatial confounding (as $\nu(\cdot)$ is assumed to be independent of $X(\cdot)$), and thus RSR is unbiased. %
 We provide a direct first-order bias result for OLS or RSR under any scenario of spatial confounding. This highlights the importance of separating the DGP from the analysis model.

\subsection{Consistency of GLS under spatial confounding}%
\label{sec:fixed}

The omitted variable bias of the OLS or RSR estimator under spatial confounding is unsurprising. Similar logic has been used to argue that a spatial GLS estimator (\ref{eq:gls}) cannot adjust for confounding either \citep[e.g.,][]{hodges}. This notion likely arises from the fact that the GLS estimator is derived from the misspecified model (\ref{eq:gp}): it marginalizes over a spatial random function $\nu(\cdot)$ with a GP prior assumed to be independent of the exposure $X(\cdot)$, completely ignoring their correlation in the DGP. We now establish the main result --- consistency of the GLS estimator under spatial confounding, contrary to this common belief and intuition. \\

\begin{theorem}\label{thm:gls} Consider locations $S_1,\ldots,S_n \sim_{i.i.d} f_s$,
    $Y(S_i)$ generated from (\ref{eq:plm}) with a continuous $g(\cdot)$, with $X(S_i)$ generated from (\ref{eq:exp}) with a continuous $h(\cdot)$ for $i=1,\ldots,n$. 
    Consider the GLS estimate $\hat\beta_{GLS}$ in (\ref{eq:gls}) using a working covariance matrix $\bSigma=K + \tau^2 I$ 
 where $0 < \tau^2 < \infty$ and $K(\cdot,\cdot)$ is a stationary Matérn kernel with any fixed set of parameters (variance $0 < \sigma^2 < \infty$, spatial decay $0 < \phi < \infty$, and smoothness $0 < \nu \leq \infty$). 
 Then $\hat\beta_{GLS} \to_p \beta^*$. \\
\end{theorem}

Theorem \ref{thm:gls} proves the consistency of the spatial GLS estimator under spatial confounding using very mild assumptions. There is no assumption on the smoothness of $g(\cdot)$ and $h(\cdot)$ beyond continuity. There is also no restriction on the extent of spatial confounding, which comes from the possible correlation of $g(\cdot)$ with $h(\cdot)$, the spatial part of $X(\cdot)$. For example, extreme cases of confounding with some or all components $h_j(\cdot)$ being exactly equal to $g(\cdot)$ are accommodated. The only distributional assumptions on the error terms $\eps(s)$ and $\eta(s)$ are zero mean and finite (and, for $\eta(s)$, non-zero) variance. There is no assumption on the shape of these error distributions (e.g., normality), even though the GLS estimator is derived by marginalizing the mixed effects analysis model (\ref{eq:gp}) with normal errors. The GP kernel $K(\cdot,\cdot)$ used to construct the working covariance matrix $\Sigma$ can be any Mat\'ern or square exponential kernel (for $\nu=\infty$) without any restriction on the parameter values used. 

Two things are central for the consistency result to hold. First, the nugget $\tau^2$ used in the working covariance matrix $\Sigma$ %
(which can be different from the true error variance $\tau_0^2=\var(\eps(s))$ needs to be strictly positive (even if there is no true nugget, i.e., $\tau^2_0=0$). Including a positive nugget in the working covariance matrix ensures that the eigen-values of $\Sigma^{-1}$ are bounded and $\hat \beta_{GLS}$ is well-behaved. Note that this is not an assumption on the DGP as the user can always force the  nugget in the working covariance matrix to be non-zero.
 
Second, the key piece to the GLS result is that the exposure must have some non-spatial variation $\eta(s)$, i.e., the $\var(\eta(s))$ must be strictly positive-definite, as specified in (\ref{eq:exp}). A similar assumption of additive noise in the exposure has been utilized in \cite{thaden} and \cite{dupont}, who also required this noise to be Gaussian, which we do not do here. Additive noise in the exposure ensures identifiability (see the discussion in Section \ref{sec:uniq}). We now provide an 
intuition as to why the GLS is consistent when this holds. 
The GLS estimator (\ref{eq:gls}) %
can be viewed as an OLS estimator based on  $Y^*=\Sigma^{-1/2}Y$ and $X^*=\Sigma^{-1/2}X$. From the DGP (\ref{eq:plm}) and (\ref{eq:exp}), we have  $X^*= \Sigma^{-1/2}h + \Sigma^{-1/2}\eta$, and $Y^*= \Sigma^{-1/2}h\beta^* + \Sigma^{-1/2}\eta\beta^* + \Sigma^{-1/2}g + \Sigma^{-1/2}\eps$. Crudely, a continuous function $g(\cdot)$ can be expanded approximately in terms of  eigenfunctions %
of a Mat\'ern kernel $K(\cdot,\cdot)$ (which is an \textit{universal kernel}) with larger weights assigned to the `flatter' or `low frequency' eigenfunctions (corresponding to the larger eigenvalues of $K(\cdot,\cdot)$ and thus of $\Sigma$). The term $\|\Sigma^{-1/2}g\|_2$ 
is approximately the norm of the weights scaled by the square root of the respective eigenvalues and will be small as the larger weights, corresponding to larger eigenvalues, will be scaled down more.  
A similar argument holds for $\|\Sigma^{-1/2}h\|_2$. Relatively, $\|\Sigma^{-1/2}\eta\|_2$ %
will be much larger as realizations of the noise process will be extremely non-smooth with high probability, giving large weights to even `high-frequency' eigenfunctions with smaller eigenvalues. 
So %
$\Sigma^{-1/2}g$ and $\Sigma^{-1/2}h$, involving the spatially smooth functions, are small in magnitude compared to $\Sigma^{-1/2}\eta$ involving the noise in the exposure. We then have
\[X^* \approx \Sigma^{-1/2} \eta \mbox{ and } E(Y^* \given X^*) = \Sigma^{-1/2}h\beta^* + \Sigma^{-1/2} \eta \beta^* + \Sigma^{-1/2} g \beta^* \approx \Sigma^{-1/2} \eta \beta^* \approx X^*\beta^*.\]
As %
$E(Y^* \given X^*)\approx X^*\beta^*$, the GLS estimator between $Y$ and $X$, which is the OLS estimator between $Y^*$ and $X^*$, is consistent for  $\beta^*$.  
The linear transformation using $\Sigma^{-1/2}$ thus approximately results in a residualization or \textit{prewhitening} of $Y$ and $X$, removing their spatial components. Regressing the residual part of the outcome on the residual noise component of the exposure is enough to identify $\beta^*$ as long as there is a nonzero noise (non-spatial) component in the exposure (ensured by assuming $var(\eta(s)) \succ O$). Residualization has been explicitly used in \cite{thaden} and \cite{dupont} to develop two-stage models that can identify the exposure effect under spatial confounding. Our result shows that the GLS carries out the residualization without  needing a two-stage analysis model. The formal proof of the GLS consistency result, provided in Section \ref{sec:gls_consist} of the Supplement, is based on the Mercer representation of the reproducing kernel Hilbert space (RKHS) of a GP with a universal kernel $K(\cdot,\cdot)$.

The inclusions of noise in the exposure generation and of a positive nugget in the working covariance matrix are essential to the consistency result. \cite{bolin2024spatial} has shown that the exposure effect cannot be consistently identified under spatial confounding if the outcome depends on a smoothed version of the exposure. Even if there is no confounding, \cite{wang2020prediction} shows that the GLS estimator is not consistent if the exposure is smooth and there is no nugget in $\Sigma$. Thus our result on consistency does not contradict these inconsistency results, which consider smooth exposures and/or a lack of a nugget in the working covariance matrix.

 The consistency of GLS under spatial confounding may seem counterintuitive, as reflected in the literature reviewed in Section \ref{sec:intro}. This is because the GP regression model (\ref{eq:gp}) from which GLS is derived
 is misspecified, assigning a GP prior for the confounder that is independent of the exposure, thereby apparently ignoring 
the correlation between the two. Indeed, the finite sample bias of the GLS estimator, as derived in \cite{paciorek,page,schnell,nobre2021effects,khan2}, stems from this misspecification. The bias is $E(t_{gX})$ where $t_{gX}=(X^\top\Sigma^{-1}X)^{-1}X^\top\Sigma^{-1}g$ whose expectation, conditional on $X$, will generally be non-zero if $g(\cdot)$ and $X(\cdot)$ are correlated. However, the asymptotic limit of $t_{gX}$ has not been studied previously and Theorem \ref{thm:gls} shows that this term vanishes asymptotically even under spatial confounding, leading to consistency of the GLS estmator and dispelling the myth about its unusability under spatial confounding. Our GLS consistency result aligns with the consistency of the estimator for $
\beta^*$ derived from Bayesian implementation of the GP regression (\ref{eq:gp}) \citep{yang}. The consistency of GLS for a fixed function $g(\cdot)$ has been
also shown previously by \cite{he}, but that work did not connect to the literature on spatial confounding,
and the result relies on technical assumptions that are hard to verify or interpret. Our result is also in accordance with the empirical and finite-sample analytical studies of \cite{paciorek} and \cite{khan2}, which demonstrated the reduction in bias from GLS estimators when the exposure varied at finer spatial scales than the confounder. The bias of OLS and the consistency of the GLS estimator are confirmed via numerical experiments in Supplementary Materials Section \ref{sec:sim_bias}. An approximate expression of the finite sample bias of the GLS estimator based on the RKHS representation is derived and empirically verified in Section \ref{sec:eigen}. The numerical experiments also reveal that interval estimates for GLS can offer poor coverage. This is not surprising as the interval estimates are derived from the analysis model (\ref{eq:gp}), which is clearly misspecified under confounding, and the estimates may be affected by so-called \textit{`regularization bias'}, as discussed in Section \ref{sec:sim_bias}. So while Theorem \ref{thm:gls} shows that the GLS point estimate is robust to the model misspecification, this may not be true of the second-order properties of the estimator.

\subsection{Consistency of spatial estimators under endogeneity}%
\label{sec:random}

We now consider a DGP where the functions $g(\cdot)$ and $h(\cdot)$ in (\ref{eq:plm}) and (\ref{eq:exp}) are %
random functions. %
Often $g(\cdot)$ and $\eps(\cdot)$ are grouped together to denote the `total error' process for the outcome. If $g(\cdot)$ is correlated with $h(\cdot)$, then $g(\cdot)+ \eps(\cdot)$  is correlated with the exposure $X(\cdot)$ leading to 
endogeneity. We now show that the GLS estimator is also consistent under such endogeneity.\\

 \begin{theorem}\label{thm:endo} Consider the DGP (\ref{eq:plm}) and (\ref{eq:exp}) where %
$(g(\cdot),h(\cdot))$ %
is a $(p+1)$-dimensional random function (stochastic process) on $\calR$ %
with almost surely continuous sample paths and independent of $\eps(\cdot), \eta(\cdot)$, and $S_1, S_2,\ldots$. Then for the same choice of working covariance matrix $\Sigma$ as in Theorem \ref{thm:gls}, $\hat \beta_{GLS} \to \beta^*$.\\
\end{theorem} 

The result is very general, imposing no distributional assumptions on the $(p+1)$-variate random function (stochastic process) $(g(\cdot),h(\cdot))$ beyond having continuous sample paths. There is also no restriction on the nature of the correlation between $g(\cdot)$ and $h(\cdot)$, thereby allowing many different scenarios of spatial confounding.
For example, $(g(\cdot),h(\cdot))$ can be any $(p+1)$-dimensional Gaussian process with any multivariate covariance kernel that leads to continuous sample paths. This includes all multivariate Mat\'ern GPs, including the extreme case of confounding where $g(\cdot)$ is a univariate Mat\'ern GP and $h_j(\cdot) = g(\cdot)$ for some or all $j$. Theorem \ref{thm:endo} also accommodates stochastic processes that are non-Gaussian and non-stationary, demonstrating the robustness of the consistency result for GLS to a wide variety of data generation mechanisms. Also note that for both Theorems \ref{thm:gls} and \ref{thm:endo}, the choice of the working covariance matrix $\Sigma$ is not tied to the true data generation process. Thus $\Sigma$ need not equal $\cov(Y)$ or $\cov(Y \given X)$. The only requirement for $\Sigma$ is to be of the form $K + \tau^2 I$ for a universal kernel $K(\cdot,\cdot)$ with any valid choice of parameters and a positive nugget $\tau^2$. In Section \ref{sec:simconsist} we empirically confirm the robustness of the GLS consistency result under confounding to covariance misspecification.

This result extending the consistency of GLS from fixed to random spatial functions in the DGP relies on a general result (Proposition \ref{prop:cond_consist} in the Supplement) about when consistency conditional on a given realization of a random function, i.e., a fixed function, implies marginal consistency accounting for the distribution of the random function. This proposition also applies to other estimators like splines or GP regression whose consistency for the PLM under spatial confounding has been established for fixed $g(\cdot)$ \citep{rice, yang}, proving that they will also be consistent, under respective assumptions, if $g(\cdot)$ is random.
Thus, when confounding is by a random spatial function, these estimators based on exogenous analysis models, that ignore the correlation between the exposure and the outcome errors, can control for endogeneity. This result, seemingly at odds with the conventional wisdom about endogeneity, holds due to the smoothness of the unmeasured confounder as opposed to a non-smooth or discrete unmeasured confounder and demonstrates the benefits of a spatially continuous confounding as opposed to unstructured or grouped confounding. We confirm this via simulations in Section \ref{sec:simcluster}.

\section{Conclusion}\label{sec:concl}
We have proved that in the presence of spatial confounding, the traditional GLS estimator based on a Gaussian process working covariance matrix can produce consistent effect estimates, despite not explicitly modeling the correlation between the confounder and exposure. The result holds as long as there is some non-spatial variation (noise) in the exposure, regardless of whether the confounder is a fixed or random function of space. In the latter case, the result holds without requiring any assumptions on the random functions beyond having continuous sample paths, thereby accommodating arbitrarily strong confounding (very strong correlation between $g(\cdot)$ and $X(\cdot)$), non-Gaussianity, and non-stationarity. %
The results also allow non-Gaussian errors in both the outcome and exposure and allows any universal kernel (e.g., Mat\'ern or square exponential covariance) with any choice of parameters as the GLS working covariance matrix. 

We also provide general results in the Supplementary section that show that notions of confounding by fixed and random functions are largely equivalent, which implies consistency of many other estimators (like splines and GP regression) under spatial endogeneity. On the other hand, the RSR or unadjusted OLS yield biased estimators under spatial confounding. 

Our study, throughout, separates the data generation process and the analysis model used to derive an estimate of the exposure effect. We define spatial confounding from the principles of causal inference via potential outcomes. We emphasize that the presence or absence of spatial confounding is determined by the data generation process, whereas the analysis model plays a role in determining whether an estimator derived from it can adjust for unmeasured spatial confounding. An unadjusted OLS estimator derived from an analysis model that regresses $y$ on $X$ will never be able to adjust for omitted confounder bias. Similarly, the RSR estimator is derived from an analysis model that places a hard constraint on the spatial variable in the outcome regression to be geometrically orthogonal to the exposure, thereby precluding the possibility that it can be a confounder and thus failing to adjust for it. The GP regression model from which the GLS estimator is derived is also misspecified apriori, as the GP prior for the spatial variable is modeled to be independent (statistically orthogonal) of the exposure. However, the prior specification is only a soft constraint, and when endowing the GP with a universal kernel such as the Mat\'ern, the prior gives positive mass to neighborhoods of any continuous function, including functions that are empirically correlated with the sample paths of the exposure. This enables the GLS estimator to successfully adjust for an unmeasured smooth confounder given enough data, and so a misspecified analysis model can still adjust for spatial confounding.

We confirm the consistency result via simulation experiments in Supplemental Section \ref{sec:sim}. Our theoretical results also agree with the conclusions of \cite{khan2}, who provided expressions for the finite-sample bias of the unadjusted OLS estimator (similar to the one in Lemma \ref{prop:ols}) and the GLS estimator and studied how the numerator and the denominator of the bias terms behave with respect to the smoothness of the exposure and the confounder. They generally concluded that the GLS estimator mitigates bias compared to the OLS estimator, but did not study the asymptotic properties of either one. Our asymptotic results complement the finite-sample analytical study and the extensive numerical experiments of \cite{khan2} in showing that the GLS estimator is indeed consistent under a wide variety of data generation scenarios with spatial confounding where the OLS estimator is inconsistent.

Consistency is only the first step. Future studies will focus on how the choice of the working covariance matrix, relative to the true data covariance, determines the efficiency of the estimators. The numerical experiments also reveal that in finite samples, even some of these consistent spatial estimators may lead to poor coverage as their variance estimates based on specific analysis models are not robust to model misspecification. Developing valid variance or interval estimates for effect estimators under spatial confounding is an important future direction. 

We note that there is a growing inventory of novel approaches to spatial confounding \citep{marques,schnell,papadogeorgou,dupont,thaden,keller, guan,gilbert}. Most approaches assume a linear exposure-outcome relationship, as we do here, and many rely on additional assumptions about the data generation process. For example, \cite{schnell} assumes a specific Markov structure on the joint distribution of the exposure and confounder, a ring graph design, and normality. The robustness of these estimators under relaxed assumptions needs to be studied.  
We have limited our attention to the partially linear DGP, but of course in practice the parametric restrictions encoded in the PLM outcome model may not hold;
\cite{gilbert} propose a robust nonparametric, causal inference framework for understanding and mitigating spatial confounding under minimal parametric assumptions. %
Many of these new estimators might be preferable to the traditional GLS estimator, as they are designed for mitigating spatial confounding and likely possess better second-order properties. 
We do not study these new estimators here as the goal here was to clarify contradictory claims about the first-order performance of traditional spatial  estimators like OLS, RSR, and GLS.

\section*{Acknowledgments}
 This work is partially supported by National Institute of Environmental Health Sciences (NIEHS) grant R01 ES033739 and ONR grant N00014-21-1-2820. The authors are grateful for the use of the facilities at the Joint High Performance Computing Exchange (JHPCE) in the Department of Biostatistics, Johns Hopkins Bloomberg School of Public Health that have contributed to the research results reported within this paper. We also thank Professor James S. Hodges for helpful discussions on this topic and for providing extensive feedback on an earlier version of the draft.

\clearpage 
\supplementstart

\section{Derivation of spatial confounding criterion for partially linear models}\label{sec:confder}

We describe how condition (\ref{eq:confcor}) defining spatial confounding in the PLM (\ref{eq:plm}) arises from the traditional causal inference machinery. %
Let $Y^{(x)}(S)$ denote the potential outcome at the exposure value $X(S)=x$ at random location $S$. %
Since (\ref{eq:plm}) completely specifies the outcome model, the distribution of $Y^{(x)}(S)$ for any $x$ and $S$ can simply be obtained by substituting $s=S$ and $X(S)=x$ in (\ref{eq:plm}).  

The regression coefficient $\beta^*=(\beta^*_1,\ldots,\beta^*_p)^\top$ in (\ref{eq:plm}) can be seen as the causal effect of $X(S)$ on $Y(S)$ while accounting for spatial effects captured through the nuisance function $g(\cdot)$. More formally, $\beta^*_i$ is the \textit{average treatment effect (ATE)} when increasing the $i^{th}$ component of the exposure by $1$ unit, i.e., letting $e_i$ denote the $i^{th}$ column of a $p \times p$ identity matrix, we have 
\begin{align*}\label{eq:ate}
    E[Y^{(x+e_i)}(S) - Y^{(x)}(S)] %
    =& E\big[E(Y^{(x+e_i)}(S)) - E(Y^{(x)}(S)) \given S\big] \\
    =&  E\big[(x+e_i)^\top\beta^* + g(S) - x^\top\beta^* - g(S)) \big]  \\
    =& e_i^\top\beta^*\\
    =& \beta^*_i.
\end{align*}

We say there is spatial confounding if the collection of potential outcomes $\{Y^{(x)}(S)\}_{x \in \mathbb R^p}$ are not independent of $X(S)$ marginally, but independence holds conditional on the unmeasured confounder $U=g(S)$. This is commonly referred to as \textit{conditional ignorability.} %

In our setup, as the outcome is linear in the exposure, for estimation of $\beta$, the independence in the definition of ignorability can be replaced with being uncorrelated. Thus spatial confounding occurs in the partially linear model (\ref{eq:plm}) when 
\begin{equation}\label{eq:conf}
    \cov(Y^{(x)}(S),X(S)) \neq 0, \mbox{ and } \cov(Y^{(x)}(S),X(S) \given U=u) =0 \mbox{ for all } x \in \mathbb R^p, u \in \mathbb R. 
\end{equation}
The conditional covariance between $Y^{(x)}(S) = x^\top\beta + U + \eps(S)$ and $X(S)$ at $U=u$ is simply 

\begin{align*}
    \cov(x^\top\beta + u + \eps(S),X(S) \given U=u) =& 
    \cov(\eps(S),X(S) \given U=u) \\
    =& \cov( E(\eps(s) \given U=u,S=s), E(X(s) \given U=u,S=s))  \\
    & \quad +  E( \cov( \eps(s), X(s) \given U=u,S=s)).
\end{align*}

As $\eps(s)$ is an iid process with zero mean and is independent of all other variables, we have $E(\eps(s) \given U=u,S=s)=E(\eps(s))=0$ and $\cov( \eps(s), X(s) \given U=u,S=s)=0$, and so both terms in the above equation are zero. Hence, the second condition in (\ref{eq:conf})  holds. The first condition in (\ref{eq:conf}) is equivalent to 
\begin{align*}
    &\cov(Y^{(x)}(S),X(S)) 
    \neq 0 \\
    &\iff \cov(Y^{(x)}(S),X(S)) \neq 0 \\
    &\iff \cov(x^\top\beta + U + \eps(S), X(S)) \neq 0 \\
    &\iff \cov(g(S) + \eps(S), X(S)) \neq 0 \mbox{ (as $x^\top\beta$ is a constant)} \\
    &\iff \cov(g(S), X(S)) +  \cov(\eps(S), X(S)) \neq 0 \\
    &\iff \cov(g(S), X(S)) +  E(\cov(\eps(s), X(s) | S=s)) + \cov(E(\eps(s)), E(X(s) | S=s)) \neq 0\\
    &\iff \cov(g(S),X(S)) \neq 0, \quad \mbox{ (as $\eps(s) \perp X(s)$ and $E(\eps(s))=0$ )}. %
\end{align*}
This establishes (\ref{eq:confcor}). Note that the derivation of (\ref{eq:confcor}) does not rely on whether the $g(\cdot)$ function is fixed or random, so, the definition of spatial confounding as given in (\ref{eq:confcor} holds for both scenarios.

If $g(\cdot)$ is a fixed function, the covariance condition in (\ref{eq:confcor}) can be expressed as
 \begin{align*}
    \cov(g(S),X(S)) =& E (\cov(g(S),X(S) \given S)) + \cov(E(g(S) \given S) ,E (X(S) \given S)) \\
    =& 0 + \cov(g(S) ,E (X(S))) \\
    =& \int_\calR E(X(s)) g(s) f_s(s) ds - \int_\calR E(X(s)) f_s(s) ds \int_\calR g(s) f_s(s) ds. \\
\end{align*} 
Here the second equality holds as conditional on $S$, $g(S)$ is not random, implying $\cov(g(S),X(S) \given S)=0$ and $E(g(S) \given S) = g(S)$. 

If $g(\cdot)$ is random, the covariance calculation in (\ref{eq:confcor}) will additionally account for the randomness in $g(\cdot)$. In this case, if $c_{gX}(s) = \cov(g(s),X(s))$ denotes the cross-covariance function between $g(\cdot)$ and $X(\cdot)$, then $E (\cov(g(S),X(S) \given S)) = E(c_{gx}(S)) = \int_\calR c_{gX}(s)f_s(s)ds$, and the covariance becomes:
 \begin{align*}
    \cov(g(S),X(S)) = & E(c_{gx}(S)) +  \int_\calR E(X(s)) E(g(s)) f_s(s) ds \\
    & \quad - \int_\calR E(X(s)) f_s(s) ds \int_\calR E(g(s)) f_s(s) ds. \\
\end{align*}

\section{Proof of Proposition \ref{prop:plm_id}}\label{plm_id_proof}

\noindent \textit{`Only if' part:} Let there exist some non-zero $a \in \mathbb R^p$ such that $x(\cdot)^\top a \in \calF$. %
Then, %
letting $\beta' = \beta^* - a$ and $g'(s) = g(s) + x(s)^\top a $, we have %
\[ Y(s) = x(s)^\top (\beta^* - a) + (g(s) + x(s)^\top a(s)) + \eps(s) = x(s)^\top \beta' + g'(s) + \eps(s).\]
Note that as $x(\cdot)^\top a \in \calF$ and $\calF$ is closed under addition, we also have $g'(\cdot) \in \calF$. So, for two choices of the parameters $(\beta^*,g(\cdot))$ and $(\beta',g'(\cdot))$ the PLM (\ref{eq:plm}) yields the same distribution of $Y(s)$ for all $s 
\in \calR$, and therefore the model is not identifiable for the parameters $
\beta^*$ and $g(\cdot)$.\\

\noindent \textit{`If' part:} We have $x(\cdot)^\top a \notin \mathcal{F}$ for all nonzero $a \in \mathbb R^p$. We prove the result by contradiction. If there exists two sets of parameters $(\beta^*,g(\cdot))$ and $(\beta',g'(\cdot))$ with $\beta^*,
\beta' \in \mathbb R^p$ and $g(\cdot), g'(\cdot) \in \calF$ that both yield the PLM (\ref{eq:plm})  for all $Y(s), s 
\in \calR$.
Then we have $E_{\beta^*,g(\cdot)}(Y(s)) = E_{\beta',g'(\cdot)}(Y(s))$ for all $s \in \calR$, implying
$x(s)^\top \beta^* + g(s) = x(s)^\top \beta' + g'(s)$ for all $s$, or equivalently
\[ g(s) - g'(s) = x(s)^\top a \mbox{ for } a = \beta' - \beta^* \mbox{ and all } s \in \calR. \]
The left side of the equation above is in $\calF$ as $\calF$ is closed under addition and scalar multiplication, and the right side is not in $\calF$ for any nonzero $a$ by hypothesis. Therefore, $a = \*0$, which implies that $(\beta^*,g(\cdot))=(\beta',g'(\cdot))$.

\section{Bias of OLS: Proof of Proposition \ref{prop:ols}}\label{supp:ols}

The DGP for the outcome is $Y=X\beta^* + g + \epsilon$. Without loss of generality, we can assume $g(\cdot)$ is centered in the sense that $\int_\calR g(s)f_s(s) ds=0$. This is because if it is not centered, then we can center it and add an intercept to the DGP for $Y(\cdot)$ and then include the intercept column in the design matrix $X$ for the OLS estimator. The bias result will remain the same. 

The error of the OLS estimator is given by 
\begin{align*}
\hat \beta_{OLS} - \beta^* = (X'X)^{-1}X'Y - \beta^* = (X'X)^{-1}X'g + (X'X)^{-1}X'\epsilon.
\end{align*}

Note that as $S_i \sim_{i.i.d}$ 
and $h(\cdot)$ is a fixed function, then $h(S_i)$ is i.i.d. Also $\eta(\cdot)$ is an i.i.d process. So it is immediate that $X(S_i)=h(S_i) + \eta(S_i)$ are also identically distributed. Independence can be proved in many ways. For any $A, B \subseteq \mathbb R^p$ 
\begin{align*}
    P(X(S_i) \in A, X(S_j) \in B) =& \int_{s_i,s_j} P(X(s_i) \in A, X(s_j) \in B) f_s(s_i) f_s(s_j) ds_i ds_j \\
    =& \int_{s_i,s_j,s_i \neq s_j} P(h(s_i)+\eta(s_i) \in A, h(s_j)+\eta(s_j) \in B) f_s(s_i) f_s(s_j) ds_i ds_j \\
    =& \int_{s_i,s_j,s_i \neq s_j} P(h(s_i)+\eta(s_i) \in A) P(h(s_j)+ \eta(s_j) \in B) f_s(s_i) f_s(s_j) ds_i ds_j \\
    =& \int_{s_i,s_j} P(X(s_i) \in A) P( X(s_j) \in B) f_s(s_i) f_s(s_j) ds_i ds_j \\
    =& \int_{s_i} P(X(s_i) \in A) f_s(s_i) ds_i \int_{s_j} P( X(s_j) \in B) f_s(s_j)  ds_j \\
    =&\, P(X(S_i) \in A) P(X(S_j) \in B).
\end{align*}
Here the second and fourth equality holds because the sampling distribution is continuous on $\calR$ (with a sampling density $f_s$). So the probability of selecting the same location is $0$. The third equality occurs as for fixed locations $s_i \neq s_j$, $h(s_i)$ and $h(s_j)$ are non-random and $\eta(s_i) \perp \eta(s_j)$ (as $\eta(\cdot)$ is an i.i.d. process). 

So $X(S_i)$ is marginally i.i.d.. By similar argument $Y(S_i)$ is i.i.d and pairs of variables are jointly i.i.d., e.g., $(X(S_i),g(S_i))$ is i.i.d. and $(X(S_i),\eps(S_i))$ is i.i.d. 

As $h$ is a bounded function (continuous on a compact domain) and $\var(\eta)$ is finite and strictly positive definite,  %
 $X(S_i) = h(S_i) + \eta(S_i)$ has finite second moment, i.e., 
 \begin{equation}\label{eq:vx}
     V_X := E\left(X(S)X(S)^\top\right),  \mbox{ where } S \sim f_s,
 \end{equation} is well-defined. 
So, by the strong law of large numbers $X^\top X / n = \frac 1n \sum_i X(S_i)X(S_i)^\top \to E(X(S)X(S)^\top) = V_X \succ O$, where $S\sim f_s$.

We also have $\eps(\cdot) \perp X(\cdot)$, $E(\eps(s))=0$ and $\var(\eps(s)) < \infty$, we have $\frac 1n X'\eps \to \cov(X(S),\eps(S)) = 0$, and as $g(\cdot)$ is centered we have $\frac 1n X'g \to \cov(X(S),g(S)) = v_{Xg} \neq 0$ as there is spatial confounding (\ref{eq:confcor}). 

So by the continuous mapping theorem, as $V_X$ is strictly positive definite, we have \[\hat \beta_{OLS} - \beta^* \to V_X^{-1}v_{Xg}.\] Once again, as $V_X \succ O$ and $v_{Xg} \neq 0$, we have $V_X^{-1}v_{Xg} \neq 0$, proving the nonvanishing error of the OLS estimator under spatial confounding.

\section{Consistency of GLS: Proof of Theorem \ref{thm:gls}}\label{sec:gls_consist}

We first introduce some definitions. %
Let $\calC(\calR)$ denote the space of all continuous real-valued functions on the spatial domain $\calR$, equipped with the supremum norm, i.e., for $f(\cdot) \in \calC(\calR)$, $\|f(\cdot)\|_\infty = \sup_{s \in \calR} |f(s)|$. For a given sampling density $f_s$, let $\calL_2(f_s)$ denote the Hilbert space $\left\{f(\cdot): \calR \to \mathbb R\, \Big|\,  \int_\calR f^2(s)f_s(s)ds < \infty\right\}$ equipped with the inner product $<f,g>_{\calL_2(f_s)} = \int_\calR f(s)g(s)f_s(s)ds < \infty$ and the corresponding norm $\|f\|_{\calL_2(f_s)}$. Recall that for a function $f(\cdot)$, we define $f=(f(S_1),\ldots,S_n))^\top$. We define the following norms: $\|f\|^2_2 = \sum_{i=1}^n f(s_i)^2$ and $\|f\|^2_n = \frac 1n \sum_{i=1}^n f(s_i)^2$.

\begin{proof}
We first prove the result when $g(\cdot)$ and each component of $h(\cdot)$ belongs to the RKHS $\calH_K$ of the kernel $K(\cdot,\cdot)$. Subsequently, we will extend to the case where $g(\cdot)$ and $h(\cdot)$ are just continuous functions. By the Mercer representation of the RKHS of Gaussian processes \citep[Theorem 4.51,][]{steinwart2008support}, $\calH_K$ contains functions that are scaled linear combinations of the eigenfunctions of $K(\cdot,\cdot)$. Formally,
\begin{equation}\label{eq:mercerrep}
\calH_K = \left\{ f(\cdot): f(\cdot) = \sum_{i=1}^\infty \alpha_i \lambda_i^{1/2} \phi_i(\cdot) \mbox{ with } \|f(\cdot)\|_{\calH_K}^2 := \sum_{i=1}^\infty \alpha_i^2 < \infty \right\}.
\end{equation}
Here $\{ \phi_i(\cdot) \}$ denotes the set of orthonormal eigenfunctions of $K(\cdot,\cdot)$ with respect to the sampling density $f_s$, i.e., $\int_{\calR} \phi_i(s)\phi_j(s)f_s(s)ds = \delta_{ij}$ where $\delta$ is the Kronecker delta,  $\{\lambda_i\}$ are the corresponding eigenvalues, and $\|\cdot\|_{\calH_K}$ denotes the RKHS norm. 

If $g \in \calH_K$, we can write 
\begin{equation}\label{eq:rkhsfunc}
    g(s) = \sum_{i=1}^\infty \alpha_{g,i} \lambda_i^{1/2} \phi_i(s),  \mbox{ with }  \sum_{i=1}^\infty \alpha_{g,i}^2 < \infty.
\end{equation} 

We have %
\begin{align}\label{eq:glsbias}
\begin{split}
\hat \beta_{GLS} - \beta^* &= (\bX^\top \bSigma^{-1}\bX)^{-1}\bX^\top \bSigma^{-1}\bY  - \beta^* \\
 &= (\bX^\top \bSigma^{-1}\bX)^{-1}\bX^\top \bSigma^{-1}(\bX\beta^* + \bg + \beps) - \beta^* \\
  &=(\frac 1n \bX^\top \bSigma^{-1}\bX)^{-1}(\frac 1n X^\top \bSigma^{-1}\bg + \frac 1n X^\top \bSigma^{-1}\beps). \\
\end{split}
\end{align}

The key step for the proof is to show that terms of the form $\frac 1n \bX^\top \bSigma^{-1}g$ are $o_p(1)$ for a smooth function $g$. %
Using the sub-multiplicative property of the $\|\cdot\|_2$ norm we have
\begin{equation}\label{eq:multineq}
    \|\frac 1n \bX^\top \bSigma^{-1}g\|_2 \leq \|\frac 1{\sqrt n} \bX^\top \bSigma^{-1/2}\|_2 \|\frac 1{\sqrt n} \bSigma^{-1/2} g\|_2 =  \sqrt{\|\frac 1n \bX^\top \bSigma^{-1} \bX\|_2 \left(\frac 1n g^\top \bSigma^{-1} g\right)}.
\end{equation} 
It suffices to prove that $\frac 1n g^\top \bSigma^{-1} g$ is $o_p(1)$ for a smooth $g$ and $\frac 1n \bX^\top \bSigma^{-1} \bX$ is $O_p(1)$. %

 Consider a truncated version $g_L(\cdot)$ of $g(\cdot)$ including only the first $L$ terms, i.e., 
\begin{equation}\label{eq:truncfunc}
    g_L(s) = \sum_{i=1}^L \alpha_{g,i} \lambda_i^{1/2} \phi_i(s).
\end{equation}
As $g(\cdot), g_L(\cdot) \in \calH_K$, we have $(g - g_L)(\cdot) \in \calH_K$ and $\|(g - g_L)(\cdot) \|^2_{\calH_K} = \sum_{i=L+1}^\infty \alpha_{g,i}^2$. Note that for any $s \in \calR$,  $K(s,s)=\sigma^2 > 0$ (as $K(\cdot,\cdot)$, being a Mat\'ern or squared-exponential kernel, is stationary). As $\sum_{i=1}^\infty \alpha_{g,i}^2 < \infty$, for any given $\eps > 0$, we can choose $L
\mbox{ such that } \|(g_L - g)(\cdot)\|^2_{\calH_K} \leq \eps\tau^2/\sigma^2$. By the reproducing property of $\calH_K$, for any $s \in \calR$ we have $|(g-g_L)(s)| = |< K(s,\cdot),(g-g_L)(\cdot)>_{\calH_K}| \leq \|K(s,\cdot)\|_{\calH_K}\|(g-g_L)(\cdot)\|_{\calH_K} = \sqrt{K(s,s)} \|(g-g_L)(\cdot)\|_{\calH_K} \leq \sigma * \sqrt \eps \tau/\sigma = \sqrt \eps \tau$, and  %
we have %
\begin{equation}\label{eq:rkhsdom}
    \|(g - g_L)(\cdot)\|^2_{\calL_2(f_s)} = \int_\calR (g(s)-g_L(s))^2 f_s(s)ds \leq \eps \tau^2. 
\end{equation}

We first consider the term $g_L^\top \Sigma^{-1} g_L$. Note that $\Sigma = K + \tau^2 I$, where $K_{ij} = K(S_i,S_j)$. Using Mercer's expansion of covariance functions \citep[Theorem 4.49 of][]{steinwart2008support}, we can write
\begin{equation}\label{eq:mercer}
    K(s_i,s_j) = \sum_{i=1}^\infty \lambda_i \phi_i(s_i)\phi_i(s_j) \mbox{ for all } s_i,s_j \in \calR.
\end{equation}
Consider the truncated kernel, 
\begin{equation}\label{eq:mercer}
    K_L(s_i,s_j) = \sum_{i=1}^L \lambda_i \phi_i(s_i)\phi_i(s_j) \mbox{ for all } s_i,s_j \in \calR, 
\end{equation}
and define the $n \times n$ matrix $K_L=(K_L(S_i,S_j))$ and $\Sigma_L = K_L + \tau^2 I$. Recall that we use the notation $\phi_i=(\phi_i(S_1),\ldots,\phi_i(S_n))^\top$. We have $K=\sum_{i=1}^\infty \lambda_i \phi_i \phi_i^\top$ and $K_L=\sum_{i=1}^L \lambda_i \phi_i \phi_i^\top$.  
As $\lambda_i \geq 0$ for all $i$, for any vector $u \in \mathbb R^n$, we have $u^\top K u = \sum_{i=1}^\infty \lambda_i (u^\top \phi_i)^2 \geq \sum_{i=1}^L \lambda_i (u^\top \phi_i)^2 = u^\top K_L u \geq 0$, i.e., $K \succcurlyeq K_L \succcurlyeq O$. Consequently, as $\tau^2 > 0$, we have $\Sigma \succcurlyeq \Sigma_L \succ O$. So, 
\begin{equation}\label{eq:truncineq}
    0 \leq g_L^\top \Sigma^{-1} g_L \leq g_L^\top \Sigma_L^{-1} g_L. 
\end{equation} 
 Let $\Phi=(\phi_1,\ldots,\phi_L)$, $D_L = \diag(\lambda_1,\ldots,\lambda_L)$ and $a_{g,L}=(\alpha_{g,1},\ldots,\alpha_{g,L})^\top$. Then 
\begin{equation}\label{eq:vector}
    g_L = \Phi_L D_L^{1/2} a_{g,L} \mbox{ and } \Sigma_L = \Phi_L D_L \Phi_L^\top + \tau^2 I. 
\end{equation}
Using the Sherman-Woodbury-Morrison identity, we have 
\begin{equation}\label{eq:lowrankinv}
     \Sigma_L^{-1} = \tau^{-2} I - \tau^{-4} \Phi_L( D_L^{-1} + \tau^{-2} \Phi_L^\top \Phi_L)^{-1} \Phi_L^\top. 
\end{equation}
Let $M_L = \frac 1n \Phi_L^\top \Phi_L$. Then we have
\begin{align*}%
    \frac 1n g_L^\top\Sigma_L^{-1}g_L =& \frac 1n  a_{g,L}^\top D_L^{1/2} \Phi_L^\top \left(\tau^{-2}I - \tau^{-4} \Phi_L( D_L^{-1} + \tau^{-2} \Phi_L^\top \Phi_L)^{-1} \Phi_L^\top \right) \Phi_L D_L^{1/2} a_{g,L}\\
    =& \frac 1n \tau^{-2} a_{g,L}^\top D_L^{1/2} \Phi_L^\top \Phi_L D_L^{1/2} a_{g,L} \\
    & \qquad - \frac 1n \tau^{-4} a_{g,L}^\top D_L^{1/2} \Phi_L^\top \Phi_L( D_L^{-1} + \tau^{-2} \Phi_L^\top \Phi_L)^{-1} \Phi_L^\top \Phi_L D_L^{1/2} a_{g,L}\\
    =& \tau^{-2} a_{g,L}^\top D_L^{1/2} M_L D_L^{1/2} a_{g,L} \\
    & \qquad -  \tau^{-4} a_{g,L}^\top D_L^{1/2} M_L( D_L^{-1}/n + \tau^{-2} M_L)^{-1} M_L D_L^{1/2} a_{g,L}\\
     =& \tau^{-2} a_{g,L}^\top D_L^{1/2} M_L (I - \tau^{-2}  (D_L^{-1}/n + \tau^{-2} M_L)^{-1} M_L) D_L^{1/2} a_{g,L}.
\end{align*}
All vectors and matrices in the above expression are at most $L$-dimensional. For a fixed $L$, as $n \to \infty$, and $S_i$ are i.i.d., we have by the law of large numbers. 
\begin{align*}
    M_L(j,k)=\frac 1n \sum_{i=1}^n \phi_j(S_i)\phi_k(S_i) \to \int_\calR \phi_j(s)\phi_k(s) f_s(s)ds = \delta_{jk}. 
\end{align*}
Hence, $M_L \to I$ as $n \to \infty$ where $I$ here is the $L \times L$ identity matrix. The other terms $a_{g,L}$ and $D_L$ %
do not depend on $n$. So $D_L^{-1}/n \to O$ and we can calculate
\begin{align*}
    \lim_{n\to \infty}\, \frac 1n g_L^\top\Sigma_L^{-1}g_L =& \tau^{-2} a_{g,L}^\top D_L^{1/2} I (I - \tau^{-2}  (O + \tau^{-2} I)^{-1} I) D_L^{1/2} a_{g,L} \\
    =& \tau^{-2} a_{g,L}^\top D_L^{1/2} I (I - \tau^{-2} \tau^2 I) D_L^{1/2} a_{g,L} \\
    =&0.
\end{align*}
Consequently, applying (\ref{eq:truncineq}) we have 
\begin{align}\label{eq:limglsqf}
    \lim_{n\to \infty} \frac 1n g_L^\top\Sigma^{-1}g_L = 0.
\end{align}
Applying Cauchy-Schwartz and the AM-GM inequality, we have
\begin{align*}
     \frac 1n g^\top\Sigma^{-1}g \leq&  \frac 1n g_L^\top\Sigma^{-1}g_L +  \frac 1n (g-g_L)^\top\Sigma^{-1}(g-g_L) + 2 |\frac 1n g_L^\top\Sigma^{-1}(g-g_L)|\\
     \leq& \frac 1n g_L^\top\Sigma^{-1}g_L +  \frac 1n (g-g_L)^\top\Sigma^{-1}(g-g_L) + 2 \sqrt{\frac 1n g_L^\top\Sigma^{-1}g_L  \frac 1n (g - g_L)^\top\Sigma^{-1}(g-g_L)}\\
    \leq& 2\frac 1n g_L^\top\Sigma^{-1}g_L + 2 \frac 1n (g-g_L)^\top\Sigma^{-1}(g-g_L).
\end{align*}
As $n \to \infty$, the first term on the right-hand side is $0$ from (\ref{eq:limglsqf}). As eigenvalues of $\Sigma^{-1}$ are bounded from above by $1/\tau^2$, the second term is less than $1/\tau^2 \|(g-g_L)\|^2_n$. As $S_i$ are i.i.d., by the law of large numbers and (\ref{eq:rkhsdom}), 
\begin{align*}
    \|g-g_L\|^2_n =  \frac 1n \sum_{i=1}^n (g-g_L)(S_i)^2 \to \int_\calR (g-g_L)^2(s) f_s(s)ds = \|(g-g_L)(\cdot)\|^2_{\calL_2(f_s)} \leq \eps  \tau^2,  
\end{align*}
and we obtain $\lim_{n\to \infty} \frac 1n g^\top\Sigma^{-1}g \leq 1/\tau^2 \eps \tau^2 = \eps$. Since $\eps > 0$ is arbitrary, we have for any $g(\cdot) \in \calH_K$
\begin{equation}\label{eq:limglsqfwhole}
    \lim_{n \to \infty} \frac 1n g^\top \Sigma^{-1} g = 0. 
\end{equation}

Returning to (\ref{eq:multineq}), let $S$ denote a random variable such that $S \sim f_s$. As $X(S_i)$ are marginally i.i.d. (see proof of Proposition \ref{prop:ols}), we have $\frac 1n X^\top \Sigma^{-1} X \preccurlyeq 1/\tau^2 \frac 1n X^\top X \to 1/\tau^2 E(X(S)X(S)^\top) = 1/\tau^2 V_X$, i.e., $\frac 1n X^\top \Sigma^{-1} X = O_p(1)$. 
As $\frac 1n g^\top \Sigma^{-1} g$ is $o_p(1)$ by \ref{eq:limglsqfwhole}) and $\frac 1n X^\top \Sigma^{-1} X = O_p(1)$ we have from (\ref{eq:multineq}) that 
$\frac 1n X^\top \Sigma^{-1} g = o_p(1)$.  

In the expression (\ref{eq:glsbias}) of the bias of the GLS estimator, the cross term $\frac 1n X^\top \bSigma^{-1}\beps$ is also $o_p(1)$, as $\eps(\cdot)$ is an i.i.d error process independent of $X(\cdot)$. Lemma \ref{lem:cross} provides a general result and formal proof of this. So the error of the GLS estimator is $o_p(1)$ as long as we can prove that $(\frac 1n X^\top \Sigma^{-1} X)^{-1}$ is $O_p(1)$, which we do below. 

We can write 
\begin{equation}\label{eq:qf}
    A_n := \frac 1n X^\top \Sigma^{-1}X =  \frac 1n \bh^\top \bSigma^{-1}\bh + \frac 1n \boeta^\top \bSigma^{-1}\boeta + \frac 2n \bh^\top \bSigma ^ {-1}\boeta.
\end{equation}

Due to the independence of the error terms $\eta$ with all other variables, the cross-term $\frac 2n \bh^\top \Sigma ^ {-1}\boeta$ is once again $o_p(1)$ by Lemma \ref{lem:cross}. As $h(\cdot)$ is bounded (being a continuous function on a compact domain), and eigenvalues of $\Sigma^{-1}$ are bounded by $1/\taus$, it follows that $\frac 1n \bh^\top \bSigma^{-1}\bh = O_p(1)$. Also, similarly as $\eta(S_i)$ are i.i.d. with finite second-moment, 
\begin{equation}\label{eq:upperboundeta}
    \frac 1n \boeta^\top \bSigma^{-1}\boeta \prec \frac 1\taus \sum_{i=1}^n \eta(S_i)\eta(S_i)^\top \to \frac 1\taus E(\eta(S)\eta(S)^\top) \implies \frac 1n \boeta^\top \bSigma^{-1}\boeta = O_p(1). 
\end{equation}
As $A_n$ is fixed ($p \times p$)-dimensional and the cross-term in (\ref{eq:qf}) is $o_p(1)$, we have $A_n -B_n \to 0$ where $B_n =  (\frac 1n \bh^\top \Sigma ^ {-1}\bh + \frac 1n \boeta^\top \Sigma ^ {-1}\boeta)$. By (\ref{eq:upperboundeta}) and the argument preceding it, both $A_n$ and $B_n$ are $O_p(1)$. Since $A_n \to B_n$ and both are fixed-dimensional $O_p(1)$ matrices, we have $\det(A_n) \to \det(B_n)$ (as the determinant is simply a polynomial of the entries of the matrix). Since $\frac 1n \bh^\top \Sigma ^ {-1}\bh \succcurlyeq {O}$, we have $\det(B_n) \geq \det(\frac 1n \boeta^\top \Sigma ^ {-1}\boeta)$ (by Lemma \ref{lem:det}) and thus $\det(A_n) > \det(\frac 1n \boeta^\top \Sigma ^ {-1}\boeta) -\upsilon$ for any $\upsilon > 0$ with probability going to $1$. Lemma \ref{lem:qf} shows that determinants of quadratic forms like $\frac 1n \boeta^\top \bSigma ^ {-1}\boeta$ are bounded away from zero (uniformly in $n$, i.e., the bound does not depend on $n$), implying that $\det(B_N)$ and consequently $\det(A_n)$ are bounded away from $0$ with probability going to $1$ (as $\upsilon$ can be chosen to be arbitrarily small), or equivalently, $1/\det(A_N)=O_p(1)$.  

Now $A_n^{-1} = \adj(A_n)/\det(A_n)$ where $\adj(A_n)$ is the $p \times p$ adjugate matrix. As $A_n$ %
is $O_p(1)$, %
and $\adj(A_n)$ is of the same fixed dimension with entries taken from $A_n$ (reordered and possibly with a change of sign), we have $\adj(A_n) = O_p(1)$.  %
So, $A_n^{-1}=(\bX^\top \bSigma^{-1} X)^{-1}$ is $O_p(1)$ and the result is proved when $g(\cdot)$ and each $h_j(\cdot)$ are in $\calH_K$.

When $g(\cdot)$ and $h_j(\cdot)$ are merely continuous functions (not necessarily belonging $\calH_K$), the only thing that needs to be proved is that $\frac 1n g^\top \Sigma^{-1} g =o_p(1)$, and similarly for the $h_j(\cdot)$. When $K(\cdot,\cdot)$ is the Mat\'ern or squared exponential kernel, the RKHS $\calH_K$ approximates any continuous function in $\calC(\calR)$ arbitrarily closely in the supremum norm \citep[Example 1 of][]{vaart}. Such kernels are called `\textit{universal kernels}' \citep{micchelli2006universal}. So for any $\delta > 0$, there is a $g_{\calH_K}(\cdot) \in \calH_K$ such that $\max_{s \in \calR} |g(s) - g_{\calH_K}(s)| < \delta.$ Thus, once again using Cauchy-Schwartz and AM-GM inequality, we have

\begin{align*}
    \frac 1n g^\top \Sigma^{-1} g \leq 2\left(\frac 1n g_{\calH_k}^\top \Sigma^{-1} g_{\calH_k} + \frac 1n (g-g_{\calH_k})^\top \Sigma^{-1} (g-g_{\calH_k}) \right).
\end{align*}

The first term on the right side is $o_p(1)$ by (\ref{eq:limglsqfwhole}) as $g_{\calH_K}(\cdot) \in \calH_K$. As eigenvalues of $\Sigma^{-1}$ are bounded from above by $1/\taus$, the second term is less than $1/\taus \frac 1n \sum_i (g(S_i) - g_{\calH_K}(S_i))^2 \leq \delta^2 / \taus.$
As $\delta > 0$ is arbitrary, the result follows. 
The same argument holds for each $h_j(\cdot)$. \\

\end{proof}

\begin{lemma}\label{lem:cross} Let $S_i \sim_{i.i.d}, i=1,2,\ldots$ denote locations sampled randomly from a domain $\calR \in \mathbb R^d$ using a sampling density $f_s$. Let $w(S_i)$, $i=1,2,\ldots$ denote i.i.d $p$-dimensional random vectors with finite second moment and let $w=(w(S_1),\ldots,w(S_n))^\top$. Let $\varepsilon(s)$ be i.i.d. zero-mean finite-variance  random variables with for $s \in \calR$, with $\varepsilon(\cdot)$ independent of $w$ and $S_1,S_2,\ldots$ Define $\varepsilon=(\varepsilon(S_1),\ldots,\varepsilon(S_n))^\top$. Let $\bSigma$ denote an $n \times n$ covariance matrix, depending only on $S_1,\ldots, S_n$ and some fixed parameters, with eigenvalues uniformly bounded (in $n$) from below. %
Then $\frac 1n w^\top  \bSigma^{-1}\varepsilon$ is $o_p(1)$.\\
\end{lemma}

\begin{proof}
It is sufficient prove the result for $p=1$ and for $p > 1$ apply that result to each entry of the $p$-dimensional vector $\frac 1n w^\top  \bSigma^{-1}\varepsilon$. Let $\calS=\{S_1,\ldots,S_n\}$ and $S \sim f_s$ denote a random location. %
Also let $E(w^2(S))=\chi$ and $\var(\varepsilon(s)) = \iota^2$ for any $s \in \calR$. These quantities are finite by the statement of the lemma. The eigenvalues of $\bSigma^{-1}$ are bounded from above by some  $T>0$. Then $\frac 1n w^\top  \bSigma^{-1} \varepsilon = \frac 1n w^\top  Q D Q^\top  \varepsilon$ where $Q D Q^\top$ is the spectral decomposition of $\Sigma^{-1}$, $Q$ has orthonormal columns and $D=\diag(d_i)$ is diagonal. Let $u = (u_1,\ldots,u_n)^\top= Q^\top  w$ and $v =(v_1,\ldots,v_n)^\top=  Q^\top  \varepsilon$. Since $Q$ is orthonormal, $\varepsilon(s)$ are i.i.d., independent of $\calS$ and $w$, and with zero mean, and $Q,D$ are fixed matrices conditional on $\calS$, we have $$E(v \given w, \calS) = E(Q^\top E (\varepsilon \given w, \calS)) =  E(Q^\top E (\varepsilon)) = 0, \mbox{ and }$$
$$ \var(v \given w,\calS) = Q^\top \var(\varepsilon \given w,\calS) Q =  Q^\top \var(\varepsilon) Q = \iota^2 I.$$
 Finally, as $Q$ is orthonormal, %
 $\|u\| = \|w\|$. %

Now, the quantity of interest is $w^\top \Sigma^{-1} \varepsilon =  u^\top  D v = \sum_i d_i u_i v_i$. %
Clearly, $E (\sum_i d_i u_i v_i) = E[E (\sum_i d_i u_i v_i \mid w,\calS)] = E [\sum_i d_i u_i E(v_i \mid w, S) ] = 0 $. We can write its variance as follows:

\begin{align*}
\var (\sum_i d_i u_i v_i) =& \var[E(\sum_i d_i u_i v_i \mid  \calS, w] + E[\var(\sum_i d_i u_i v_i \mid  \calS, w)]\\
=& 0 + E[ (\sum_i d_i^2 u_i^2) \iota^2] \quad \mbox{ (as } E(v \given w,\calS)=0, \var(v \given w,\calS) = \iota^2 I\mbox{)}\\ 
\leq& \iota^2 T^2 E(\sum_i u_i^2)\\
=& \iota^2 T^2 E(\sum_i w(S_i)^2) \quad \mbox{ (as } \|w\| = \|u\|\mbox{)}\\
=& n \iota^2 T^2 \chi.
\end{align*}

Putting this together,

\begin{align*}
\lim_n \var(\frac 1n w^\top  \bSigma^{-1} \varepsilon ) =& \lim_n  \var(\frac 1n u^\top  \bD v)\\
=& \lim_n \frac{1}{n^2} \var(u^\top \bD v)\\
\leq& \lim_n \frac 1n T^2 \iota^2 
\chi \\
=&\; 0
\end{align*}

Since $\frac 1n w^\top  \Sigma^{-1} \varepsilon $ has mean zero and limiting variance zero, it converges to zero in probability by Chebyshev's inequality.\\
\end{proof}

\begin{lemma}\label{lem:qf} Let $S_i \sim_{i.i.d}, i=1,\ldots,n$ denote locations sampled randomly from a domain $\calR \in \mathbb R^d$ using a sampling density $f_s$. Let $\bSigma$ denote an $n \times n$ covariance matrix with constant diagonal entries and eigenvalues uniformly bounded (in $n$) from below. Also let $\boeta_i \sim_{i.i.d}$ random variables in $\mathbb R^p$ such that $\var(\eta_i)$ is finite and strictly positive definite, and denote  $\boeta=(\boeta_1,\ldots,\boeta_n)^\top$. Then $1/\det(\frac 1n\boeta^\top \Sigma^{-1}\boeta)$ is $O_p(1)$.\\ %
\end{lemma}

\begin{proof}
Let $\eta_i=(\eta_i^{(1)},\ldots,\eta_i^{(p)})^\top$ for $i=1,\ldots,n$ and $\eta^{(j)}=(\eta^{(j)}_1,\ldots,\eta^{(j)}_n)^\top$ for $j=1,\ldots,p$. Let the scalar $D$ equal the constant diagonal entries of $\Sigma$ ($D$ exists by the statement of the Lemma), i.e., $D=\Sigma(i, i)$. As the eigenvalues of $\Sigma$ are uniformly bounded away from $0$, let $T$ denote a lower bound of the GP eigenspace of $\Sigma$, i.e., $0 < T \leq \lambda_i(\Sigma)$ for all $i$ and $n$, $\lambda_i$ denoting the $i^{th}$ eigenvalue of $\Sigma$. %
If $B=\frac 1 {\sqrt n} (\Sigma^{1/2}\boeta ,\, \Sigma^{-1/2}\boeta)$ then $B^\top B$ is positive semidefinite and $\boeta^\top  \Sigma \boeta/n \geq T \boeta^\top \boeta / n$ is positive definite with probability going to $1$ (as $\var(\eta_i) \succ O$). 

The rest of this argument holds with probability going to $1$, conditioning on the set where $\boeta^\top \boeta / n$ (and, thus $\boeta^\top \Sigma \boeta / n$) are positive definite. 

As $B^\top B$ and its top-left block  $\boeta^\top \Sigma \boeta / n$ are positive definite, the Schur complement $\Psi = \frac 1n\boeta^\top \Sigma^{-1}\boeta - \frac 1n \boeta^\top \boeta(\frac 1n \boeta^\top \Sigma\boeta)^{-1}\frac 1n \boeta^\top \boeta$ is positive semidefinite. So by Lemma \ref{lem:det}, 

\[\det(\frac 1n \boeta^\top \Sigma^{-1}\boeta) \geq \det(\Psi) + \det(\frac 1n \boeta^\top \boeta(\frac 1n \boeta^\top \Sigma\boeta)^{-1}\frac 1n \boeta^\top \boeta) \geq \det(\frac 1n \boeta^\top \boeta(\frac 1n \boeta^\top \Sigma\boeta)^{-1}\frac 1n \boeta^\top \boeta).\]

Equivalently, we have
\[\frac 1{\det(\frac 1n \boeta^\top \Sigma ^ {-1}\boeta)} \leq \frac{1}{\det(\frac{1}{n}\boeta^\top \boeta(\frac 1n \boeta^\top \Sigma\boeta)^{-1}\frac 1n \boeta^\top \boeta)}\]\[ =  \frac{\det(\boeta^\top \Sigma\boeta/n)}{\det(\boeta^\top \boeta/n)^2} \]
The inverse of the denominator is immediately $O_p(1)$ as the $\eta_i$ are i.i.d and $\var(\eta_i) \succ O$. For the numerator, let $M_n = \boeta^\top \Sigma \boeta/n$. We immediately have $E [M_n(j,j)] = E(\eta^{(j)\top} \Sigma \eta^{(j)}/n) = \rho_{j} \tr(\Sigma)/n = \rho_{j} D$  where $\rho_{j}$ is the variance of $\eta^{(j)}$. Therefore, by Markov's inequality, each $M_n(j,j)$ is $O_p(1)$ and by Hadamard's inequality, $\det(M_n) \leq \prod_{j=1}^p M_n(j,j)$ and is $O_p(1)$. 
\\
\end{proof}

The following fact, which is essentially a weak corollary of the Brunn-Minkowski determinant inequality \citep[see, e.g.,][ for a modern treatment]{yuan2007generalization} that will be useful for our results.

\begin{lemma}\label{lem:det}
Let $A$ and $B$ denote positive semidefinite matrices. Then $\det(A+B) \geq \det(A) + \det(B)$. 
\end{lemma}

\begin{proof} First we consider the case where $A, B$ are positive definite. 
Then the matrix form of the Brunn-Minkowski inequality \citep{yuan2007generalization} gives
\begin{equation}
    \det(A+B)^{1/p} \geq \det(A)^{1/p} + \det(B)^{1/p}
\end{equation}
where $p$ is the dimension of $A$ and $B$. As $\det(A) > 0$ and $\det(B) > 0$, it follows that
\begin{equation}
    \det(A+B) \geq (\det(A)^{1/p} + \det(B)^{1/p})^p \geq \det(A) + \det(B).
\end{equation}

This proves the case where $A, B$ are strictly positive definite. If $A,B$ are positive semidefinite but not necessarily positive definite, define $A(\eps) = A + \eps I$ and $B(\eps) = B+\eps I$. For $\eps>0$, $A(\eps), B(\eps)$ are strictly positive definite, so we have 
\begin{equation}\label{eq:det}
    \det(A(\eps)+B(\eps)) \geq \det(A(\eps)) + \det(B(\eps)). 
\end{equation} 
Since the determinant is simply a polynomial function of the matrix entries, we can write
\[   \lim_{\eps \to 0^+} \det(A(\eps)+B(\eps)) \geq  \lim_{\eps \to 0^+} [\det(A(\eps)) + \det(B(\eps))]
\]
\[
    \det(A+B) \geq \det(A) + \det(B).
\]

\end{proof}

\section{Conditional and marginal consistency, and Proof of Theorem \ref{thm:endo}}\label{cond_marg}

\subsection{Conditional consistency implying marginal consistency}\label{cond_consist_conv_proof}

We first present a general result on when estimators that are consistent conditional on realizations of some random variables, are also consistent marginally (i.e., over multiple draws of that random variable). This result is then used to prove the consistency of GLS when the spatial terms in the outcome and exposure are random functions.\\

\begin{proposition}\label{prop:cond_consist}
Let $\hat\beta$ be an estimator whose distribution depends on random variables $U$ and $V$ with $U \perp V$. If there exists a set $\mathcal{U}$ such that $\pr(U \in \mathcal{U}) = 1$ and for all $u \in \mathcal{U}$, $\hat\beta$ converges in probability to $\beta^*$ conditional on $U=u$, then $\hat\beta$ converges unconditionally to $\beta^*$.\\
\end{proposition}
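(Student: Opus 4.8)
The plan is to reduce the unconditional statement to an application of the bounded (dominated) convergence theorem. Fix $\epsilon > 0$ and define
\[
p_n(g) = \pr\!\left(\,|\hat\beta_n - \beta| > \epsilon \;\middle|\; g\,\right),
\]
the conditional probability viewed as a measurable function of the random element $g$. The hypothesis that $\hat\beta_n$ converges in probability to $\beta$ conditional on $g$ means precisely that $p_n(g) \to 0$ for every fixed value $g \in \mathcal{G}$; since $\pr(\mathcal{G}) = 1$, this is convergence almost surely in $g$. Each $p_n(g)$ takes values in $[0,1]$, so the constant function $1$ is an integrable dominating envelope.

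Next I would invoke the tower property of conditional expectation, $\pr\!\left(|\hat\beta_n - \beta| > \epsilon\right) = \mathbb{E}\big[\,p_n(g)\,\big]$, and apply bounded convergence: since $p_n(g) \to 0$ almost surely and $0 \le p_n(g) \le 1$, we get $\mathbb{E}[p_n(g)] \to 0$. Hence $\pr\!\left(|\hat\beta_n - \beta| > \epsilon\right) \to 0$ for every $\epsilon > 0$, which is exactly unconditional convergence in probability. If one wants to allow $\beta$ itself to depend on $g$ (as in the random-effects/endogeneity DGP), the identical argument applies verbatim with $\beta = \beta(g)$ placed inside the conditional probability defining $p_n(g)$.

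The only genuine subtlety — and the step I would be careful to state cleanly — is the measurability of $g \mapsto p_n(g)$ together with the existence of a regular conditional distribution of $\hat\beta_n$ given $g$, so that the ``conditional on $g$'' statements are well defined and the tower property is available. In the settings of this paper $g$ is either a fixed function or a Gaussian process evaluated at finitely many locations, and $\hat\beta_n$ is a measurable function of $g$ and the independent errors $\epsilon,\eta$; everything lives on a standard (Polish) probability space, so regular conditional probabilities exist and $p_n$ is automatically measurable. With that dispatched, no further obstacle arises: the proposition is essentially the assertion that almost-sure conditional convergence in probability implies unconditional convergence in probability, a direct consequence of bounded convergence.
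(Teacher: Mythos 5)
Your proof is correct, and it takes a genuinely different (and more compact) route than the paper's. You fix $\epsilon>0$, set $p_n(g)=\pr\left(|\hat\beta_n-\beta|>\epsilon \mid g\right)$, observe that the hypothesis gives $p_n(g)\to 0$ for almost every $g$, and conclude $\pr\left(|\hat\beta_n-\beta|>\epsilon\right)=\mathbb{E}[p_n(g)]\to 0$ by bounded convergence. The paper instead runs a self-contained ``good set'' argument: it defines $G(\delta,n)=\{g:\forall N\ge n,\ \pr(|\hat\beta_N-\beta|>\delta\mid g)<\delta\}$, notes that $\pr(g\in G(\delta,n))\to 1$ as $n\to\infty$ by the almost-sure conditional convergence, splits the unconditional probability according to whether $g\in G(\delta,n)$, and bounds the $\limsup$ by $\delta$ for arbitrary $\delta>0$. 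The two arguments are morally the same --- the paper's decomposition is essentially a hand-rolled proof of the bounded convergence theorem specialized to $[0,1]$-valued functions --- but yours buys brevity and a cleanly quantified claim (a single $\epsilon$ rather than the coupled role $\delta$ plays as both threshold and bound), while the paper's stays elementary and avoids invoking a named convergence theorem. Your closing remarks on measurability of $g\mapsto p_n(g)$ and the existence of regular conditional distributions address a point the paper passes over silently, and your observation that the argument survives letting $\beta=\beta(g)$ is a harmless extra. No gaps.
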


\begin{proof} %
We will write $\hat \beta = \hat \beta(U,V)$ whenever the dependence of $\hat \beta$ on $U$ and $V$ needs to be made explicit. Let $F_U$ denote the distribution of $U$ and $\calU$$(\delta, n) = \{u: \forall N \geq n : \pr(| \hat\beta(u,V) - \beta^* | > \delta ) < \delta  \}$.  Then 

\begin{equation}\label{eq:cond_to_marg}
\begin{split}
 \limsup_n \pr [|\hat \beta - \beta^*| > \delta ] &= \limsup_n \{ \pr[(U \in \calU(\delta, n)) \cap (|\hat \beta - \beta^*| > \delta)] + \\
 & \qquad \pr[U \notin \calU(\delta, n) \cap (|\hat \beta - \beta^*| > \delta) ]   \}\\
 & \leq \limsup_n \int_{U \in \calU(\delta,n)} \pr [|\hat \beta - \beta^*| > \delta \mid  U=u] F_U(du) + \\
 & \qquad \limsup_n \pr[U \notin \calU(\delta, n)]. %
 \end{split}
\end{equation}

The first term on the right hand side of (\ref{eq:cond_to_marg}) can be expressed as 
\begin{align*}
    & \limsup_n \int_{U \in \calU(\delta,n)} \pr [|\hat \beta(u,V) - \beta^*| > \delta \mid  U=u] F_U(du) \\
    & \qquad = \limsup_n \int_{U \in \calU(\delta,n)} \pr [|\hat \beta(u,V) - \beta^*| > \delta] F_U(du) \\
    & \qquad \leq \delta.
\end{align*} 

Here the equality occurs as $U \perp V$, and the inequality occurs from the definition of $\calU(\delta,n)$. 

Next we consider the second term $\limsup_n \pr[U \notin \calU(\delta, n)]$ in the right side of (\ref{eq:cond_to_marg}).

 Note that the sets $\calU(\delta,n)$ are increasing in $n$. So 
\[\limsup_n \pr[U \in \calU(\delta, n)] = \lim_n \pr[U \in \calU(\delta, n)] =  \pr[\lim_n \calU(\delta, n)].\]
Also, recall that the set $\calU$ of $u$ for which $\hat \beta(u,V)$ converges in probability to $\beta^*$ has probability one.
If $u \in \calU$, then $\hat \beta(u,V) \to \beta^*$ which implies that for any $\delta$, there is some $n_0$ for which $ \forall N \geq n_0 : \pr[| \hat\beta(u,V) - \beta^* | > \delta] < \delta $. So, $u \in \calU$ implies $ u \in \calU(\delta,n)$ for large enough $n$, i.e., $u \in \lim_n \calU(\delta,n)$. As $P(\calU)=1$, we then have $P(\lim_n \calU(\delta,n)) = 1$.

Returning to (\ref{eq:cond_to_marg}), therefore, for any $\delta > 0 $

\[
\limsup_n [\pr (|\hat \beta - \beta^*| > \delta) ] \leq \delta + 0 = \delta.
\]

Since $\delta >0$ was arbitrary,  $\hat \beta \to \beta^*$ in probability.
\end{proof}

\subsection{Proof of Theorem \ref{thm:endo}}

We apply Proposition \ref{prop:cond_consist} with $U=(g(\cdot),h(\cdot))$, $V=(S_1,S_2,\ldots,\eta(\cdot),\eps(\cdot))$, and $\calU=\{\omega \given (g(\cdot)(\omega),h(\cdot)(\omega)) \mbox{ is continuous }\}$. As sample paths of $(g(\cdot),h(\cdot))$ are almost surely continuous, we have $P(\calU)=1$. Also, by Theorem \ref{thm:gls}, $\hat\beta_{GLS}(u,V)$ is consistent for any draw $u=(g(\cdot)(\omega),h(\cdot)(\omega))$ which are continuous fixed functions. Hence, the conditions of Proposition \ref{prop:cond_consist} are met, and the result follows.

\subsection{Marginal consistency implies conditional consistency}\label{marg_consist_conv_proof}

As an aside, we note that the converse of Proposition \ref{prop:cond_consist} is also true, as given in Proposition \ref{prop:cond_consist_conv}.

\begin{proposition}\label{prop:cond_consist_conv}
Let $\hat\beta(U,V)$ be an estimator whose distribution depends on random variables $U$ and $V$ with $U \perp V$. Assume that $\hat\beta$ converges unconditionally to $\beta^*$. Then there exists a set $\mathcal{U}$ such that for all $u \in \mathcal{U}$, $\hat\beta(u,V)$ converges in probability to $\beta^*$ , and $\pr(U \in \mathcal{U}) = 1$.\\ %
\end{proposition}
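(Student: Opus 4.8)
The plan is to turn the unconditional convergence hypothesis into a statement about a single nonnegative function of $g$, push it through Fubini's theorem, and bridge the gap between convergence in probability and almost-sure convergence by a diagonal subsequence argument. Since the distribution of $\hat\beta_n$ depends on the random variable $g$, work on the joint law of $(g,\hat\beta_1,\hat\beta_2,\dots)$. For $\epsilon>0$ set $Q_n(\epsilon):=\pr(\,|\hat\beta_n-\beta|>\epsilon \mid g\,)$, a measurable function of $g$ with values in $[0,1]$. By the tower property $\pr(|\hat\beta_n-\beta|>\epsilon)=E[Q_n(\epsilon)]$, so ``$\hat\beta_n\to\beta$ unconditionally in probability'' is exactly ``$E[Q_n(\epsilon)]\to0$ for every fixed $\epsilon>0$'', equivalently $Q_n(\epsilon)\to0$ in $L^1$, hence in probability, as functions of $g$.

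First I would upgrade this to almost-sure convergence along a subsequence. Fix $\epsilon_m=1/m$; since $Q_n(\epsilon_1)\to0$ in probability, extract a subsequence along which it converges to $0$ almost surely, thin further to handle $\epsilon_2$, and so on, and pass to the diagonal subsequence $(n_k)$, so that $Q_{n_k}(\epsilon_m)\to0$ almost surely for every $m$. Let $\mathcal G$ be the intersection over $m$ of the associated probability-one sets, a countable intersection, so $\pr(\mathcal G)=1$. For $g\in\mathcal G$ and any $\epsilon>0$, choosing $m$ with $\epsilon_m<\epsilon$ gives $Q_{n_k}(\epsilon)\le Q_{n_k}(\epsilon_m)\to0$, which is precisely the statement that $\hat\beta_{n_k}\to\beta$ in probability conditional on $g$. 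If the unconditional hypothesis is (or can be strengthened to) almost-sure convergence of $\hat\beta_n$, the subsequence step is unnecessary: Fubini applied to the event $\{\hat\beta_n\to\beta\}$ shows that for almost every $g$, $\hat\beta_n\to\beta$ almost surely, hence in probability, conditional on $g$, so one may take $\mathcal G=\{g:\hat\beta_n\to\beta \text{ conditional on }g\}$ and the conclusion holds for the whole sequence.

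The step I expect to be the real obstacle is this last one: convergence in probability of $Q_n(\epsilon)$ to $0$ does not by itself force $Q_n(\epsilon)\to0$ almost surely — a ``moving bump'' example, in which $\hat\beta_n$ is a deterministic function of $g$ supported on a shrinking interval that sweeps across the range of $g$, makes the conditional tail probabilities fail to converge for any single $g$ — so with only the bare convergence-in-probability hypothesis the clean Fubini argument yields the conclusion along a subsequence, and one needs the slightly stronger unconditional mode of convergence (e.g. summable tail probabilities, which would follow from suitable moment bounds on $\hat\beta_n-\beta$, or unconditional almost-sure convergence) to obtain it for the full sequence. In the writeup I would therefore run the direct Fubini argument as the main line, under the almost-sure reading of the hypothesis, and append the diagonal-subsequence device as a remark covering the case where only convergence in probability is available; together with Proposition \ref{prop:cond_consist} this pins down when conditional and unconditional consistency coincide.
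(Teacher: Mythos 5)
Your route is genuinely different from the paper's. The paper argues by contradiction: it posits a positive-probability set $\mathcal H$ of bad $g$'s, each with a witnessing tolerance $\epsilon(g)>0$, takes a median $\epsilon^*$ of $\epsilon(g)$ over $\mathcal H$, restricts to $H^*=\{g\in\mathcal H:\epsilon(g)\ge\epsilon^*\}$ with $\pr(H^*)\ge p/2$, and concludes $\limsup_n \pr(|\hat\beta_n-\beta|>\epsilon^*)\ge \tfrac p2 \limsup_n \pr(|\hat\beta_n-\beta|>\epsilon^*\mid g\in H^*)>0$. You instead work directly with the conditional tail probabilities $Q_n(\epsilon)=\pr(|\hat\beta_n-\beta|>\epsilon\mid g)$, identify unconditional consistency with $E[Q_n(\epsilon)]\to 0$, and extract a diagonal subsequence along which $Q_{n_k}(1/m)\to 0$ almost surely for every $m$, taking $\mathcal G$ to be the countable intersection of the exceptional-null complements.

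More importantly, the ``obstacle'' you flag is not merely an obstacle for your approach; it is a gap in the paper's own proof, and as stated the proposition is false under the convergence-in-probability reading (which is the reading used in Proposition \ref{prop:cond_consist}). The paper's final inequality requires the $\limsup_n$ of the \emph{average} of $Q_n(\epsilon^*)$ over $H^*$ to be positive, given only that $\limsup_n Q_n(\epsilon^*)(g)>0$ pointwise on $H^*$; this interchange of $\limsup$ and integral is unjustified, and reverse Fatou bounds $\limsup_n\int Q_n$ from \emph{above} by $\int\limsup_n Q_n$, i.e., in the wrong direction. Your moving-bump construction is exactly the typewriter counterexample: with $g\sim\mathrm{Unif}(0,1)$, $A_n$ the $n$-th dyadic subinterval, and $\hat\beta_n=\beta+\mathbf 1_{A_n}(g)$, one has $\pr(|\hat\beta_n-\beta|>\tfrac12)=|A_n|\to 0$ unconditionally, yet $\pr(|\hat\beta_n-\beta|>\tfrac12\mid g)=1$ infinitely often for every $g$, so conditional consistency fails on a set of probability one rather than zero. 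Your two repairs --- the subsequence version under convergence in probability, or the full-sequence version via Fubini when the unconditional convergence is almost sure (or the tails are summable) --- are the correct forms of the converse; the forward direction, Proposition \ref{prop:cond_consist}, is unaffected.
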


\begin{proof} We prove by contradiction. Let $F_U$ denote the distribution of $U$. Assume there is a set $\mathcal{H}$ such that $p=\pr(U \in \mathcal{H})>0$, and for all $u \in\mathcal{H}$, $\hat\beta(u,V)$ does \textit{not} converge to $\beta^*$. Then there exists a functional $\epsilon(u)>0$ such that for all $u\in\mathcal{H}$: $\limsup_n  \pr(|\hat\beta(u,V)-\beta^*| > \epsilon(u) ) > 0$. Let $\epsilon^*$ be a median of $\epsilon(u)$, over the conditional distribution of $U$ within $\mathcal{H}$. Clearly, $\epsilon^*>0$ since $\epsilon(u)$ is strictly positive. Let $H^* = \{u: (u \in \mathcal{H}) \cap (\epsilon(u) \geq \epsilon^*)\}$. By construction, $\pr(U \in H^*) \geq \frac{p}{2}$. So we have

\begin{align*}
\limsup_n \pr[|\hat\beta-\beta^*| > \epsilon^* ] &\geq  \limsup_n \pr[(|\hat\beta(U,V)-\beta| > \epsilon^*) \cap  (U \in H^* )]\\
& = \limsup_n \int_{H^*} \pr[|\hat\beta(u,V)-\beta| > \epsilon^* \given U=u] d(F_U)\\
& = \limsup_n \int_{H^*} \pr[|\hat\beta(u,V)-\beta| > \epsilon^*] d(F_U). \quad (\mbox{ as } U \perp V)\\
\end{align*}

As $\limsup_n \pr[|\hat\beta(u,V)-\beta| > \epsilon^*]$ is strictly positive for $u \in H^*$, and $\pr[U \in H^*] \geq p/2$, the quantity above is strictly positive, 
contradicting the assumption that $\hat\beta$ converges to $\beta^*$ in probability. Thus no set $\mathcal{H}$ can exist, and the set of $u$ for which $\hat\beta(u,V)$ does not converge to $\beta$ has measure zero.

\end{proof}

\section{Simulations}\label{sec:sim}

\subsection{Fixed/Random confounder}\label{sec:sim_bias}
We demonstrate that a spatial model is needed to account for omitted variable bias and that the distinction between fixed and random confounding functions is not crucial for the performance of the spatial estimators.

 We use $n=10,000$ locations chosen uniformly at random from the square with coordinates between 0 and 10. On these points, we simulate a confounding variable $g(\cdot)$ from a Gaussian process with spherical covariance $\widetilde \Sigma(\cdot,\cdot)$ with marginal variance $\sigma^2=1$ and inverse range $\phi = 0.25$ \citep[using the \texttt{BRISC} R-package,][]{saha}. Note that this GP is used to simulate a smooth fixed function $g(\cdot)$ and is unrelated to the GP covariance function used in the GLS estimator. We run $1,000$ replications. In the first scenario, we generated only one \textit{fixed confounder} surface $g(\cdot)$, which is maintained throughout all runs while regenerating $X$ and $Y$ in each replicate run. 
 The confounder is smoothed by a thin plate spline with $200$ basis functions. 
 We summarize the data generation process below. 
\begin{equation}\label{eq:dgpfixed}
    \begin{split}
            S_i \sim_{i.i.d}\, \mbox{Unif}(0,10) \times \mbox{Unif}(0,10), D_{ij} = |S_{i}-S_{j}|, \\
            \widetilde \Sigma_{ij} \approx \mathbb{I}(D_{ij} \leq \phi^{-1})*\sigma^2(1- \frac{3}{2}\phi D_{ij} + \frac{1}{2} (\phi D_{ij})^2),  Z^* \sim N(0, \widetilde \Sigma_{ij}), \\
            g(\cdot) = \operatorname{spline.smooth}(Z^*),   g= (g(S_1),\ldots,g(S_n))^\top,\\
             X \sim_{ind} N(g, 1), 
            Y \sim_{ind} N(X+g, 1).
            \end{split}
    \end{equation}
So, in this scenario we have a univariate exposure $X(\cdot)$ and its spatial part $h(\cdot)$ is exactly equal to $g(c\dot)$, inducing strong confounding. Note the given correlation function $\widetilde \Sigma$ is approximate because the data are generated with the \texttt{BRISC} package \citep{saha}, which uses a nearest-neighbor approximation to generate large-scale realizations of Gaussian process.

 We estimated the coefficient $\beta$ on $X$ (whose true value $\beta^*$ equals 1, as we see in (\ref{eq:dgpfixed})) using OLS, GLS, and three different spline models.   %
Due to the large sample size, the GLS estimator is computed according to the exponential covariance function using a nearest-neighbor Gaussian process approximation \citep{nngp,finley2019efficient} implemented in the \texttt{BRISC} R-package \citep{brisc}, and the confidence interval uses the \texttt{BRISC} parametric spatial bootstrap method \citep{saha}. The OLS confidence interval is analytic. In addition, we fit three different spline models using the \texttt{mgcv} package \citep{wood}, all of which use $200$ basis functions. The first, ``GAM" is the default model with a smoothness penalty chosen by generalized cross-validation. The next, ``GAM.fx" (for ``fixed") has no smoothness penalty. Finally, ``Spatial.plus" is the Spatial+ technique of \cite{dupont}, a two-stage procedure which uses a spatially residualized version of the exposure. The GAM models return an analytic standard error from the linear model including the basis functions, which we use to construct the confidence intervals.

\begin{figure}[!h]
    \centering
    \includegraphics[trim={0 0 0 24},clip,scale=.7]{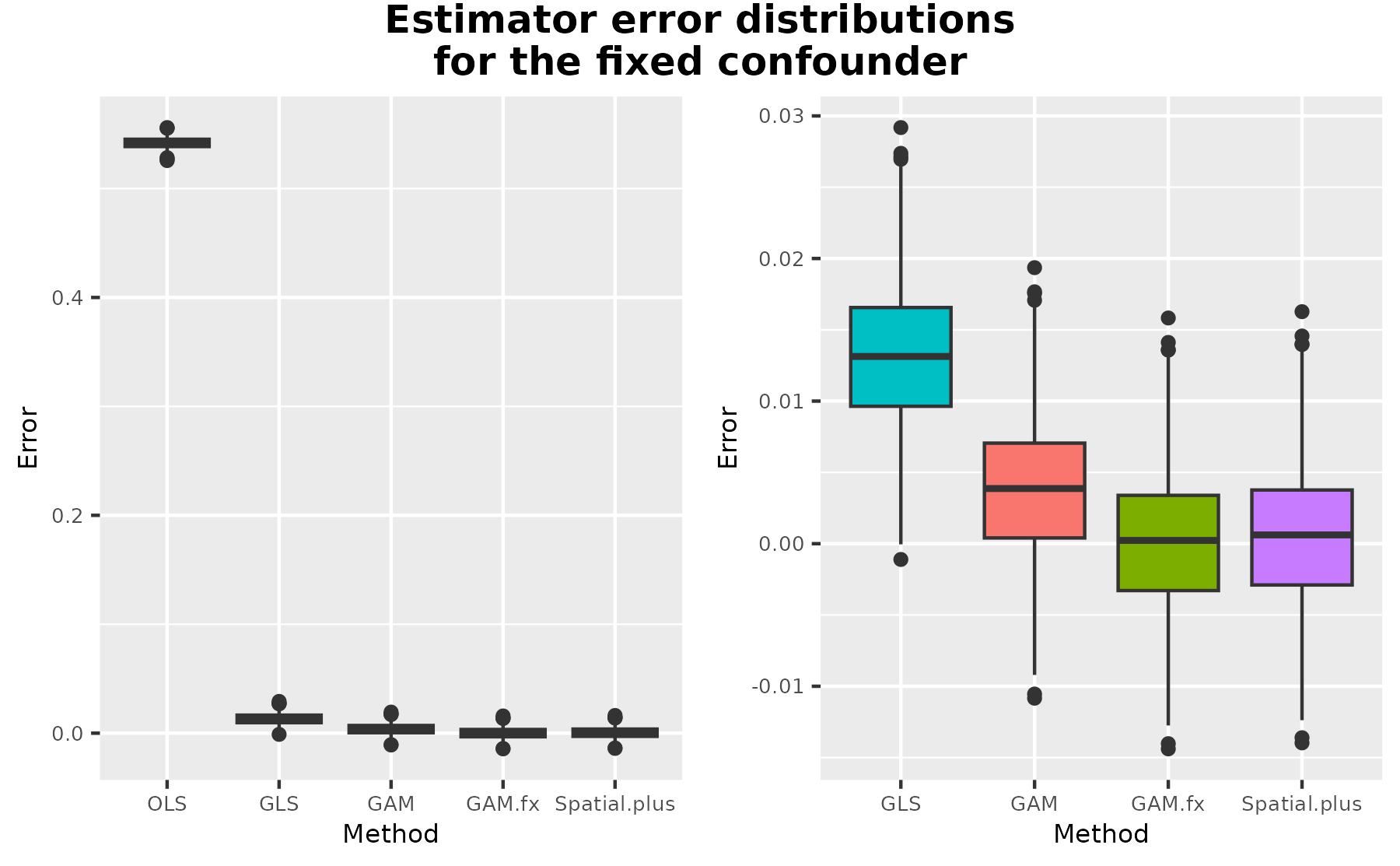}
    \caption{Sampling distributions of the bias under spatial confounding by a fixed function. For improved visibility, the right plot is a version of the left plot that removes the OLS (due to its very large bias) and zooms in on the performance of the non-OLS methods.}
    \label{fig:fixed}
\end{figure}

We find in Figure \ref{fig:fixed} (left) that, despite the strong unmeasured spatial confounding present in the DGP, all spatial methods adjust for nearly all the large bias present in the OLS estimator. Zooming into the performance of only the spatial estimators in Figure \ref{fig:fixed} (right), we see that the GAM and GLS methods still exhibit some bias which also shows up in poor confidence interval coverage in Table \ref{tab:fixed}.
\begin{table}[h]
\centering
\begin{tabular}{lrrrrr}
  \hline
  & OLS & GLS & GAM & GAM.fx & Spatial.plus \\
  \hline
 Coverage & 0\% & 25\% & 89\% & 96\% & 96\% \\
  \hline
\end{tabular}
   \caption{95\% confidence interval coverage probabilities for all estimators for the fixed confounder}\label{tab:fixed}
\end{table}
For GAM, this small bias is likely a regularization bias incurred due to the penalized fitting method used, as we see that GAM.fx, which does not penalize, does not suffer from this. It has been shown that efficient estimation of the exposure effect with the spatial term modeled as splines requires undersmoothing (less penalization) of the splines \citep{rice}. It is likely that what we see here for GAM as opposed to GAM.fx is a manifestation of this. 

For GLS we also suspect that some regularization bias is contributing to its bias and poor coverage. The GLS estimator is derived from the marginalization of the GP analysis model (\ref{eq:gp}), and \cite{yang} has shown that undersmoothing is also necessary for this GP-based PLM to obtain minimax optimal estimates of $\beta$. They recommend letting the spatial decay in the GP covariance go to infinity with sample size at a certain rate, which we do not do in our implementation as the spatial covariance parameters are estimated using maximum likelihood. Also, the true confounding function here (generated as a spline-smoothed fixed draw of a Gaussian process) is quite ``wiggly," and oversmoothing in an analysis model will lead to bias. When a smoother confounding function was used in \cite{gilbert} in a similar partially linear DGP, the GLS had near nominal coverage \citep[see Table S2 of ][]{gilbert}. Our simulations in Section \ref{sec:simcluster} using a confounder that is simply a linear function of space also yield near nominal coverage for the spatial GLS estimator (see Table \ref{table:gls_coverage}). These further points to undercoverage in this particular setting caused by oversmoothing when the spatial parameters in the working covariance matrix of GLS are estimated by MLE and where the true confounder is less smooth.

Also note that the GAM estimators' overall superior coverage (compared to GLS) may be explained by the fact that the data are truly generated according to spline basis functions, meaning that all the GAM-based models are correctly specified. On the other hand, the \texttt{BRISC} parametric bootstrap method used to generate the confidence intervals is under the assumption of dependent data with a Gaussian process covariance, which is misspecified under confounding. This demonstrates that %
model misspecification may not affect first-order properties, e.g., the consistency of the GLS point estimator, but can adversely impact its variance estimation.

The Spatial+ and GAM.fx methods perform well, and equally so, both apparently because they do not suffer from regularization bias.
This demonstrates that for methods suffering from this kind of bias, it can be beneficial to augment the outcome regression with exposure models as well, as discussed in \cite{gilbert} and \cite{dupont}. Finally, we note that the bias of GLS,  while larger than the GAM methods, is still quite small $(\sim 3\%)$ relative to the true effect size of $1$, and is an order of magnitude smaller than the bias of the OLS/RSR estimator.

\begin{figure}[h]
    \centering
    \includegraphics[trim={0 0 0 24},clip,scale=.7]{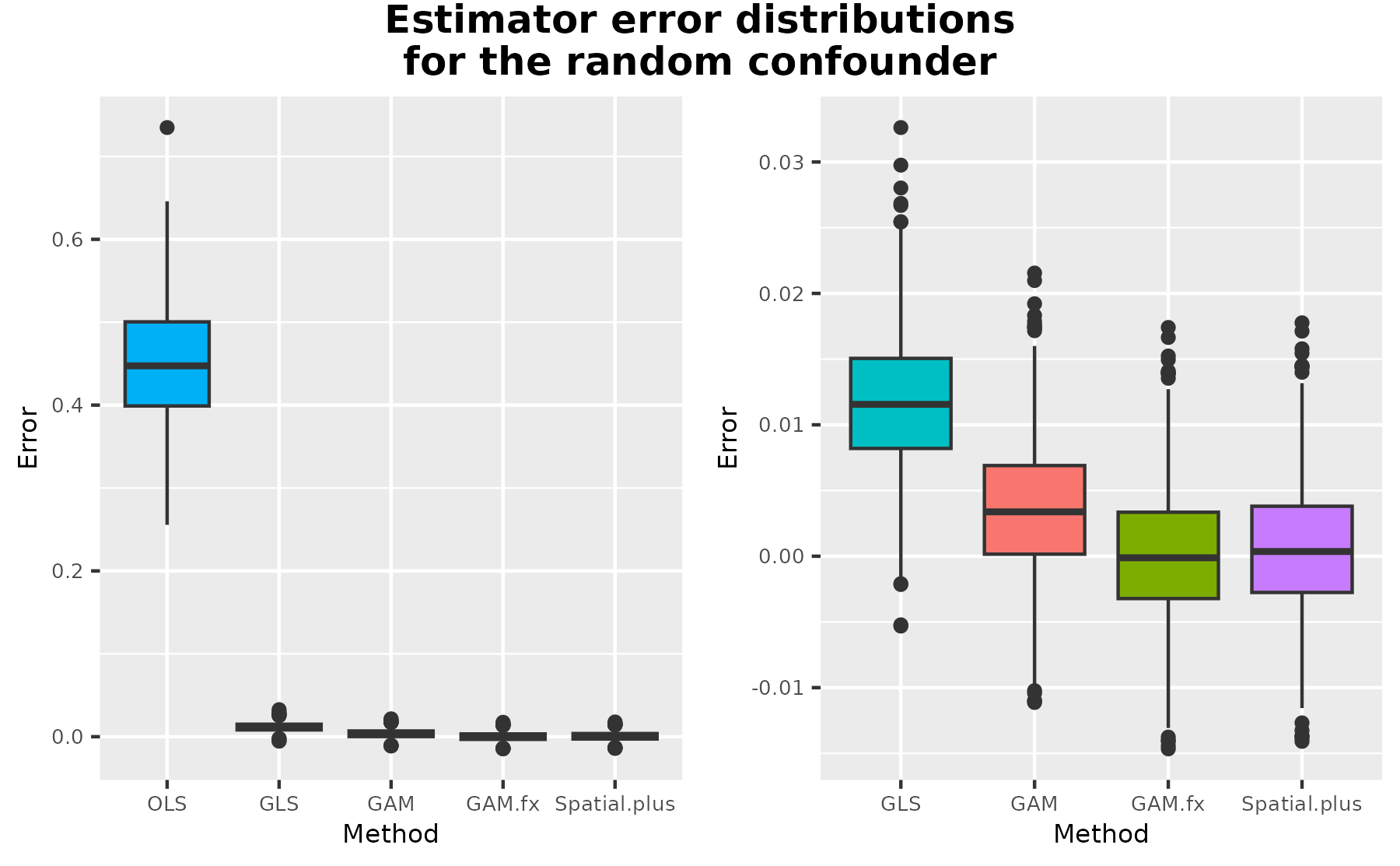}
    \caption{Sampling distributions of the bias under random spatial confounding (endogeneity). For improved visibility, the right plot zooms in on the performance of the non-OLS methods.}
    \label{fig:random}
\end{figure}

We then consider a scenario where confounding is by a random function $g(\cdot)$. We use a DGP identical to the scenario of confounding by fixed function as specified in (\ref{eq:dgpfixed}) except that the variable $Z^*$ is regenerated according to its Gaussian process distribution for each run of the simulation, resulting in a random $g(\cdot)$ (which is the spline smoothed version of $Z^*$) for every simulation. The results in Figure \ref{fig:random} and Table \ref{tab:random} are extremely similar to the fixed confounder scenario, except that the OLS estimator exhibits much larger variance, as shown by the greater width of the distribution.

\begin{table}[h]
\centering
\begin{tabular}{lrrrrr}
  \hline
  & OLS & GLS & GAM & GAM.fx & Spatial.plus \\
  \hline
 Coverage & 0\% & 35\% & 89\% & 94\% & 94\% \\
  \hline
\end{tabular}
\caption{95\% confidence interval coverage probabilities for all methods for the random confounder}\label{tab:random}
\end{table}

\newpage
\subsection{Analytical expression of finite-sample bias of GLS}\label{sec:eigen}

In this simulation study, we investigate the finite-sample performance of the GLS estimator. We use the Mercer representation of the reproducing kernel Hilbert space (RKHS) of the Gaussian process covariance kernel used in the GLS estimator to provide an approximate expression for the bias term of the GLS estimator which we then verify empirically through numerical experiments.

We consider the DGP (\ref{eq:plm}) with confounding by a fixed function $g(\cdot)$ and a one-dimensional exposure $X(\cdot)$. Conditional on the locations and the values of $X$, the exact bias of the GLS estimator $\hat\beta_{GLS} = ({X}^\top {\Sigma}^{-1} {X})^{-1} {X}^\top {\Sigma}^{-1} {Y}$ is
\begin{equation}\label{eq:exact}
\mbox{Bias}_{\mbox{exact}} = \frac{\bX^\top  \bSigma^{-1} \bg}{\bX^\top  \bSigma^{-1} \bX}.    
\end{equation}

Here $\Sigma$ is the working covariance matrix used in the GLS estimator (\ref{eq:gls})., i.e.,  %
${\Sigma} = {K} + \tau^2 {I}$, 
with the GP kernel matrix ${K}$ and nugget variance $\taus$.

We consider the case where the functions $g(\cdot)$ and $h(\cdot)$ (respectively corresponding to the outcome model (\ref{eq:plm}) and exposure model (\ref{eq:exp})) are both finite linear combinations of eigenfunctions of $K(\cdot,\cdot)$, i.e., for some integer $L$ we have

\begin{equation}\label{eq:gh}
    g(s) = \sum_{i=1}^L \alpha_{g,i}\lambda_i^{1/2} \phi_i(s), h(s) = \sum_{i=1}^L \alpha_{h,i}\lambda_i^{1/2} \phi_i(s). 
\end{equation}

As discussed in Section \ref{sec:fixed}, the RKHS $\calH_K$ of $K(\cdot,\cdot)$ approximates any continuous function, and as seen from the definition of $\calH_K$ in (\ref{eq:rkhsfunc}), it is the closure of functions of the form (\ref{eq:gh}). Hence, for large enough $L$, the functions $g$ and $h$ specified via (\ref{eq:gh}) can approximate any continuous function. 

Using Mercer's theorem, we can write $K(s,u) = \sum_{l=1}^\infty \lambda_l \phi_l(s)\phi_l(u)$ where $\lambda_l \geq 0$ are the eigenvalues (in decreasing order) and $\phi_l(\cdot)$ are the corresponding orthonormal basis functions. %
As 
$ \int K(s,u)\phi_l(u)f_s(u) du = \lambda_l \phi_l(s)\, \forall l, s.$
If $S_i \sim_{i.i.d.} f_s$, then the LHS can be approximated as a Riemann sum to obtain
$$ \frac 1n \sum_{i=1}^n K(s,S_i)\phi_l(S_i) \approx \lambda_l \phi_l(s)\,\forall l, s. \implies  %
\frac 1n \bK \bphi_l \approx \lambda_l \bphi_l, l=1,\ldots,n.$$ Letting $\bPhi=(\bphi_1,\ldots,\bphi_n)^\top $ and $\bD = \diag(\{\lambda_l \mid l=1,\ldots,n\})$, as $\frac 1n \Phi^\top \Phi \to I$, we have 
$$\bK \approx  (\Phi/\sqrt n) (n\bD) (\Phi^\top/\sqrt n)$$ as an approximate spectral decomposition of $K$, and consequently, 

\begin{equation}\label{eq:eigen}
\bSigma \approx \left(\frac 1 {\sqrt n} \bPhi\right) (n\bD + \taus \bI) \left(\frac 1 {\sqrt n} \bPhi^\top \right);\, \bSigma^{-1} \approx \left(\frac 1 {\sqrt n} \bPhi\right) (n\bD + \taus \bI)^{-1} \left(\frac 1 {\sqrt n} \bPhi^\top \right).
\end{equation}

For $g(\cdot)$ and $h(\cdot)$ as in (\ref{eq:gh}) and $\Sigma^{-1}$ approximated as in (\ref{eq:eigen}), we have the following approximate expression of the GLS bias predicted by the theoretical eigenfunction analysis: 
\begin{equation}\label{eq:biasrkhs}
\text{Bias}_{\text{pred}} = \frac{\sum_{i=1}^{L} \alpha_{g,i} \alpha_{h,i} \lambda_i / (n \cdot \lambda_{i} + \tau^2)}{\bX^\top  \bSigma^{-1} \bX/n}. %
\end{equation}

From the form of the theoretically predicted approximate bias, it is clear that the bias goes to zero as $n \to \infty$ if $L$, the number of eigenfunctions included in $\bg(\cdot)$ and $\bh(\cdot)$, and the corresponding coefficients, remain constant.
We now compare the exact and the eigen-expansion-based expressions of the bias given respectively in (\ref{eq:exact}) and (\ref{eq:biasrkhs}) of the GLS estimator in a numerical study. 

The expression in (\ref{eq:biasrkhs}) and the proof of Theorem \ref{thm:gls} relied on the eigenfunctions of the assumed covariance structure, which in practice often do not admit closed-form expressions. However, in the case of one-dimensional locations generated according to a Gaussian sampling density $f_s$, a squared-exponential covariance function \textit{does} have analytic expressions for the eigenfunctions and eigenvalues.

Our simulation is structured as follows. For $n=3,000$, we generate $n$ locations $S_1,\ldots,S_n$ from a $N(0,1)$ distribution. On these points, we calculate the squared-exponential kernel as 
\begin{equation}
K(s, s') = \exp \left( - \phi (s - s')^2 \right),
\end{equation}
where we take $\phi=1/2$. As given in \cite{rasmussen}, this kernel $K$ has eigenvalues
\begin{equation}
\lambda_k = \sqrt{\frac{2a}{A}}B^k
\end{equation}
and eigenfunctions
\begin{equation}
\phi_k(x) = \exp \left(- (c - a) x^2\right) H_k \left(\sqrt{2c} x\right)
\end{equation}
where $H_k(x) = (-1)^k \exp(x^2) \frac{d^k}{dx^k} \exp(-x^2)$ is the $k$th order Hermite polynomial, and the constants are $a^{-1} = 4\sigma^2$, $b = \phi$, $c = \sqrt{a^{2} + 2ab}$, $A = a + b + c$, $B = \frac{b}{A}$.

For $L=5$, we calculate spatial functions $\bh$ and $\bg$ as in (\ref{eq:gh}) as linear combinations of the first $L$ eigenfunctions with random coefficients as follows. Reparametrizing $c_{g,i} = \alpha_{g,i}\lambda^{1/2}$ and $c_{h,i} = \alpha_{h,i}\lambda^{1/2}$, these true coefficients are sampled independently from a standard normal distribution:
\[
c_{g, i}, c_{h,i} \sim_{i.i.d} N(0,1); \quad i=1,...,L.
\] 

Then the data are generated as 
\begin{align*}
{X} &= {h} + \eta; \quad \eta \sim \mathcal{N}(0, \kappa^2),\\
{Y} &=  {X} + {g} + \epsilon; \quad \epsilon \sim \mathcal{N}(0, \tau_0^2).
\end{align*}

We use $\kappa^2 = \frac{1}{16}$ and $\tau_0^2=1$. The small value of $\kappa^2$, making ${X}$ nearly a spatial function itself, ensures that the bias need not be close to zero in samples of limited size.

For GLS estimation, we use a working covariance matrix %
${\Sigma} = K + \tau^2 {I}$ where $K_{ij}=K(S_i,S_j)$.
Note that we take $\tau^2 \neq \tau_0^2$ to demonstrate that the nugget variance need not be known \textit{a priori}. 

We compare the exact and theoretically predicted biases. The results of the simulation are shown in Figure \ref{fig:eigensim} for $500$ trials. 
It is clear that the exact bias and the bias predicted from the theory of Section \ref{sec:fixed} are generally in good agreement. This offers an empirical verification of the theoretical argument on the consistency of the GLS estimator using Mercer representation of the RKHS of the Gaussian process kernel. 

\begin{figure}[h]
    \centering
    \includegraphics[scale=.45]{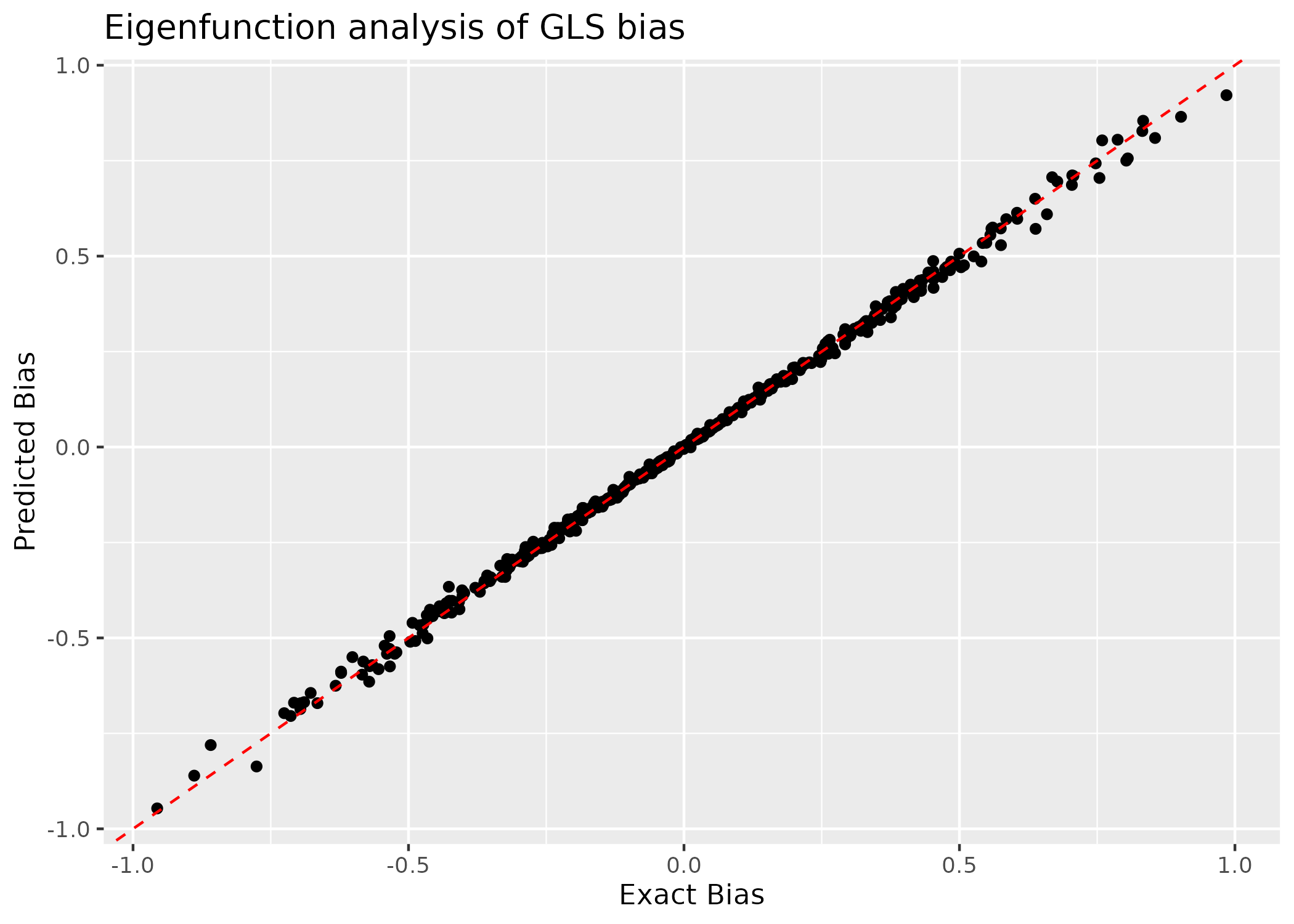}
    \caption{Comparison of the exact bias and the bias predicted from the theory of Section \ref{sec:fixed} for the GLS estimator using eigenfunction analysis of the covariance function.}
    \label{fig:eigensim}
\end{figure}

\newpage
\subsection{Role of working covariance matrix and sample size}\label{sec:simconsist}

Our theoretical results on the consistency of GLS do not require the working covariance matrix used in the GLS estimator to match the true covariance of the outcomes. We conduct a simulation study to empirically verify the robustness of the consistency result to misspecification of the working covariance matrix, and assess the change in biases of the estimators with sample size. We generate data as 
\begin{align*}
    S_i &\sim Unif([0,1]^2), i=1,\ldots,n,\\
    X(S_i) &\sim N(g(S_i),v_0^2), \mbox{ where } v^2_0=1,\\
    Y(S_i) &\sim N(X(S_i)\beta^* + g(S_i), \tau^2_0), \mbox{ where } \beta^*=1, \mbox{ and } \tau^2_0=1. 
\end{align*}

We consider three data generation scenarios: 

\begin{enumerate}[(i)]
    \item The true confounder $g(\cdot)$ is a fixed function of space, showed in Figure \ref{fig:surface} (left). The working covariance matrix for the GLS estimator is $\Sigma=K + \tau^2 I$ where $K(s,s')=\sigma^2 \exp(-\phi \|s-s'\|)$ is an exponential covariance matrix with variance $\sigma^2 = 2$ and decay (inverse range) $\phi=10$, and $\tau^2=0.25$ is the working nugget. Note that as the true $g(\cdot)$ is fixed here, the spatial structure in $y(\cdot)$ comes from $g(\cdot)$ in the mean and not from the covariance. So the working covariance matrix used in the GLS is misspecified. Even the nugget used in the working covariance matrix $\tau^2=0.25$ is different from the true nugget $\tau^2_0=1$ used to generate the $y(S_i)$'s. \\
    
    \item The true confounder $g(\cdot)$ is a random function (a GP with a squared exponential covariance function), i.e., $g \sim GP(0,K_0)$ where $K_0(s,s') = \sigma^2_0 \exp(- \phi_0 \|s-s'\|^2)$. One example (one draw) from this confounder process is given in Figure \ref{fig:surface} (right). The working covariance matrix is correctly specified, i.e., $\Sigma = K_0 + \tau_0^2 I$, with $\sigma^2=2, \phi_0=20,\tau^2_0=1$. \\
    
    \item  The true $g(\cdot)$ is the same as scenario (ii), but the working covariance is misspecified, taken to be the same choice as in scenario (i). So not only the covariance family is misspecified (true = squared exponential, working = exponential), but %
    the inverse range, and the nugget are also misspecified as %
    $\tau^2 \neq \tau^2_0$, and $\phi \neq \phi_0$. 
\end{enumerate} 

\begin{figure}[h]
    \centering
    \includegraphics[width=0.4\linewidth]{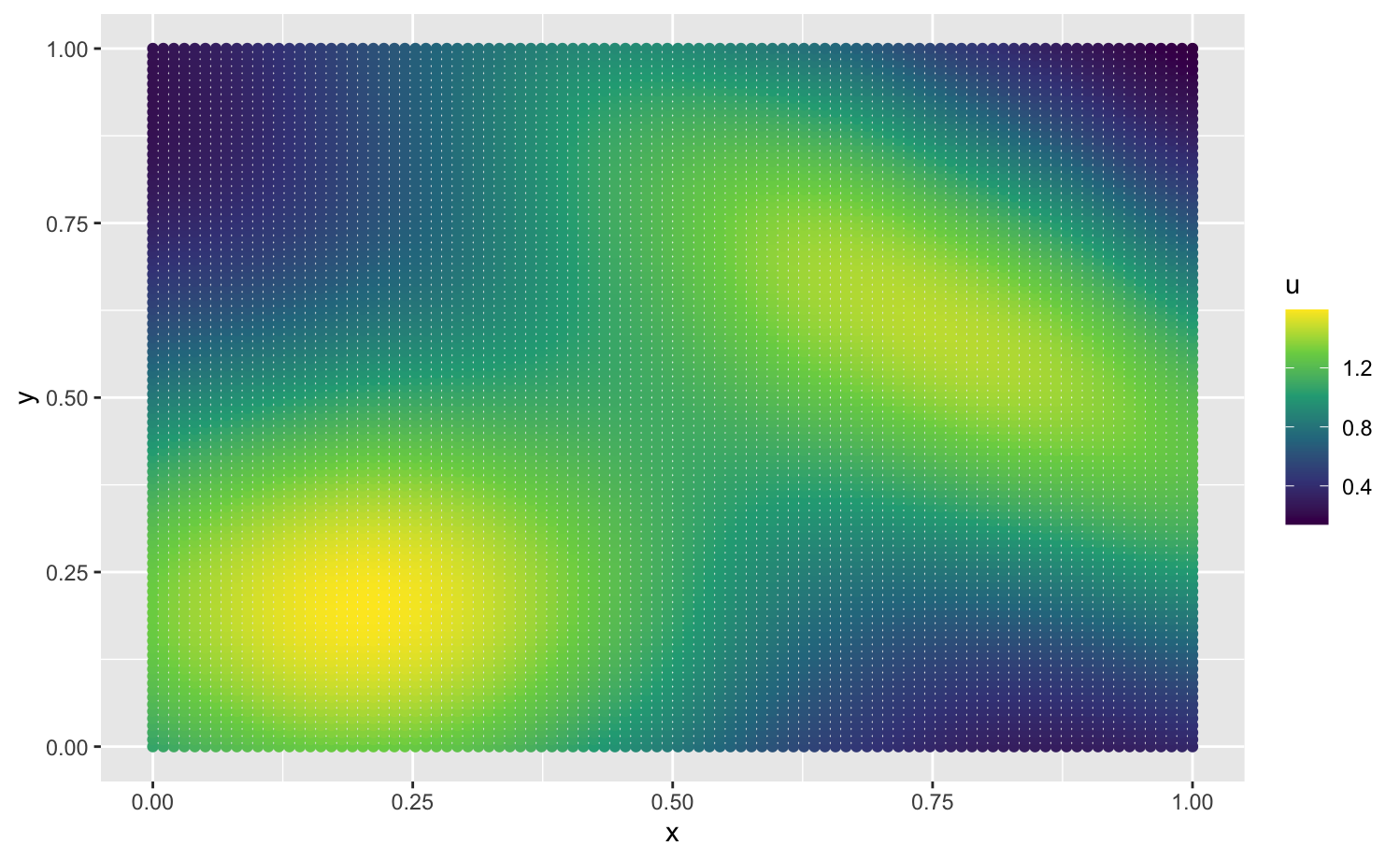}
    \includegraphics[width=0.4\linewidth]{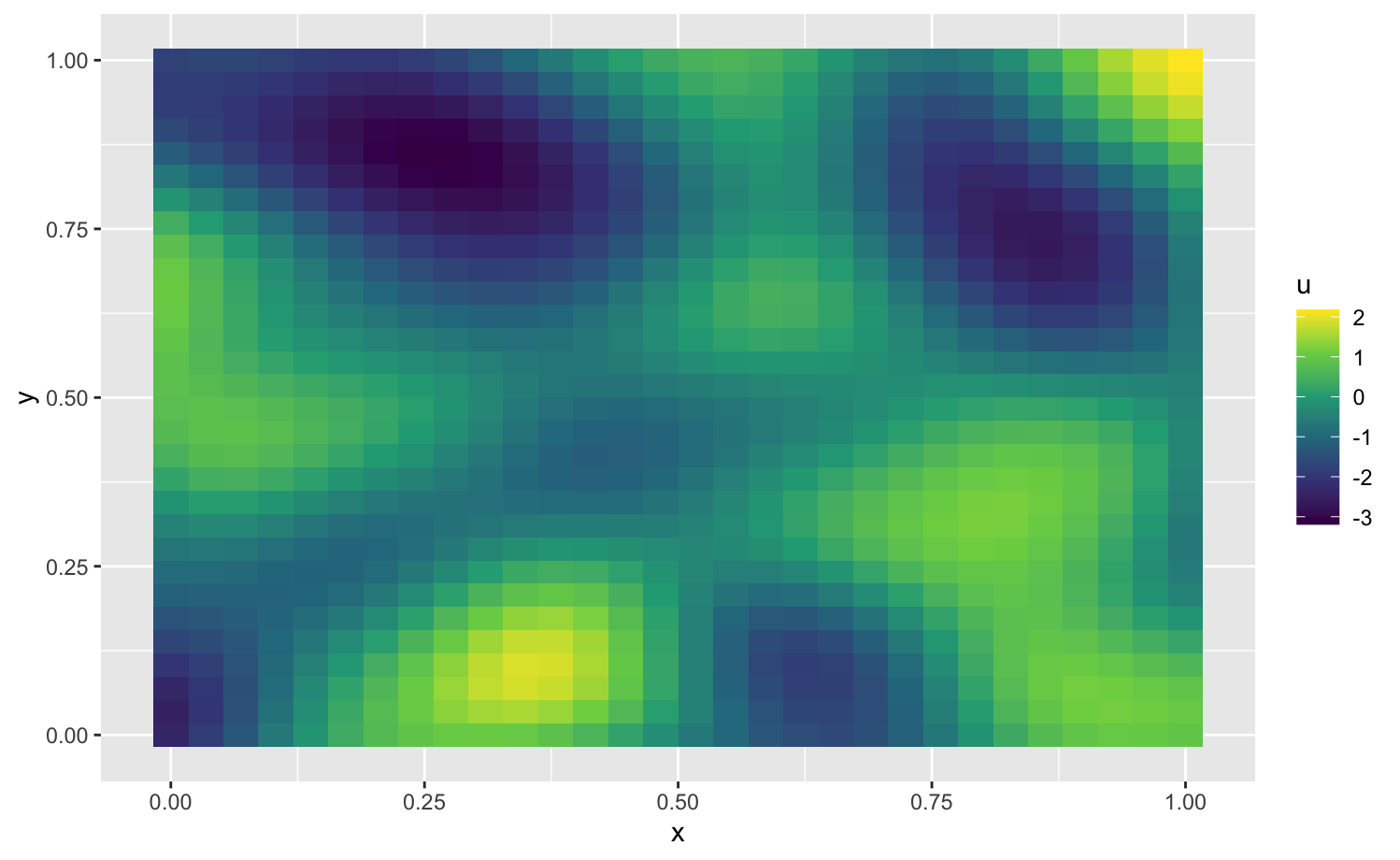}
    \caption{(Left) The fixed $g(\cdot)$ for Scenario (i), (Right) One example (for one seed) of a random $g(\cdot)$ for scenarios (ii) and (iii).}
    \label{fig:surface}
\end{figure}

    For each scenario, we ran $100$ replicate simulations and present the average absolute bias of the GLS estimator (in \%) in Figure \ref{fig:bias}. We see that for all 3 scenarios, the bias of the GLS sharply decreases to zero with sample size, while the bias of the OLS stays roughly at the same non-zero level. Comparing scenarios (ii) and (iii), we do see the benefit of using a correctly specified working covariance matrix as the biases are generally lower than the analogous numbers for the misspecified setting (particularly at smaller sample sizes). %

    \begin{figure}
        \centering
        \includegraphics[width=0.9\linewidth]{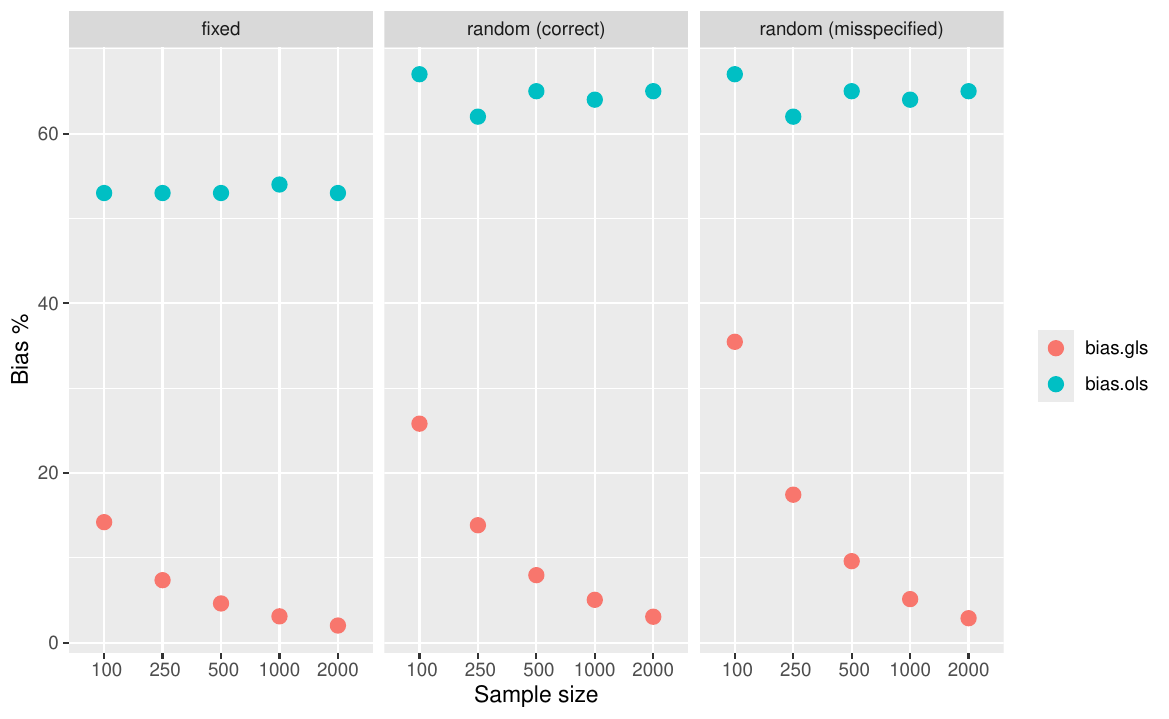}
        \caption{Bias of the GLS and OLS under spatial confounding as a function of sample size. The three panels correspond to fixed confounder (Scenario (i)), random confounder with correctly specified GLS working covariance matrix (Scenario (ii)), and random confounder with misspecified GLS working covariance matrix (Scenario (iii)).}\label{fig:bias}
    \end{figure}

\newpage
\subsection{Spatially smooth (continuous) unmeasured confounding versus  clustered (discrete) unmeasured  confounding}\label{sec:simcluster}

The following simulation assesses to what extent adding exogenous (i.e., not endogenous) spatial errors in the analysis model can adjust for spatial confounding or endogeneity in the DGP, contrary to the well-established literature showing that it cannot mitigate confounding in general. To compare performance of exogenous analysis models that exploit the spatially continuous nature of the confounder versus ones that simply treat the confounder as a discrete group-specific variable, we slightly deviate from the DGP considered in (\ref{eq:plm}), and consider a generalization of the DGP (\ref{eq:plm}) where multiple observations are recorded at each (one-dimensional) location. The sample size is $3,000$; this is achieved with $n=300$  locations and $k=10$ data points for each location. We use a confounder that is a simple linear function of spatial location in order to focus on the differences in location modeling. The DGP can be summarized as 
\begin{align*}
S_{ij} &= i, & g(s)\,&=s/10,\\
X_{ij} &\sim N(g(S_i), 1), & Y_{ij} &\sim N(X_{ij} + g(S_i), 1) \mbox{ for } i=1, \dots, 
n; j=1,\dots,k.\\
\end{align*}

 We investigate the impact of modeling location in two distinct ways by fitting two versions of GLS; one is a discrete analysis model with group-specific random effects that do not use any spatial information, i.e., 
 \begin{equation}\label{eq:glsdisc}
 Y_{ij} \sim N(X_{ij} + \nu_i, 1), \nu_i \sim_{i.i.d} N(0,\sigma^2).
 \end{equation}
 The other is an analysis model that treats location as a continuous measurement and uses a Gaussian process to model the random effects, i.e., 
  \begin{equation}\label{eq:glscont}
      Y_{ij} \sim N(X_{ij} + \nu(S_i), 1), \nu(\cdot) \sim GP(0,K(\cdot,\cdot)).
  \end{equation}
  The GLS estimator for either analysis model arises after marginalization of the random effects over their respective priors.
  
 The GLS estimator for the grouped location is computed with the location-specific independent random effects model using the \texttt{lme4} package \citep{bates}; this estimator we call ``grouped." The GLS estimator for the continuous location is computed using a Gaussian process with an exponential covariance function. However, due to the large sample size, we use a nearest-neighbor Gaussian process approximation \citep{nngp} implemented in the \texttt{BRISC} R-package \citep{saha}. This estimator we call ``spatial." Standard errors for the grouped estimator come from subsampling 1/20 of the dataset 120 times. The spatial estimator uses the \texttt{BRISC} bootstrap. The empirical distributions are based on 1,000 replications.

\begin{figure}[ht]
    \centering
    \includegraphics[scale=.4]{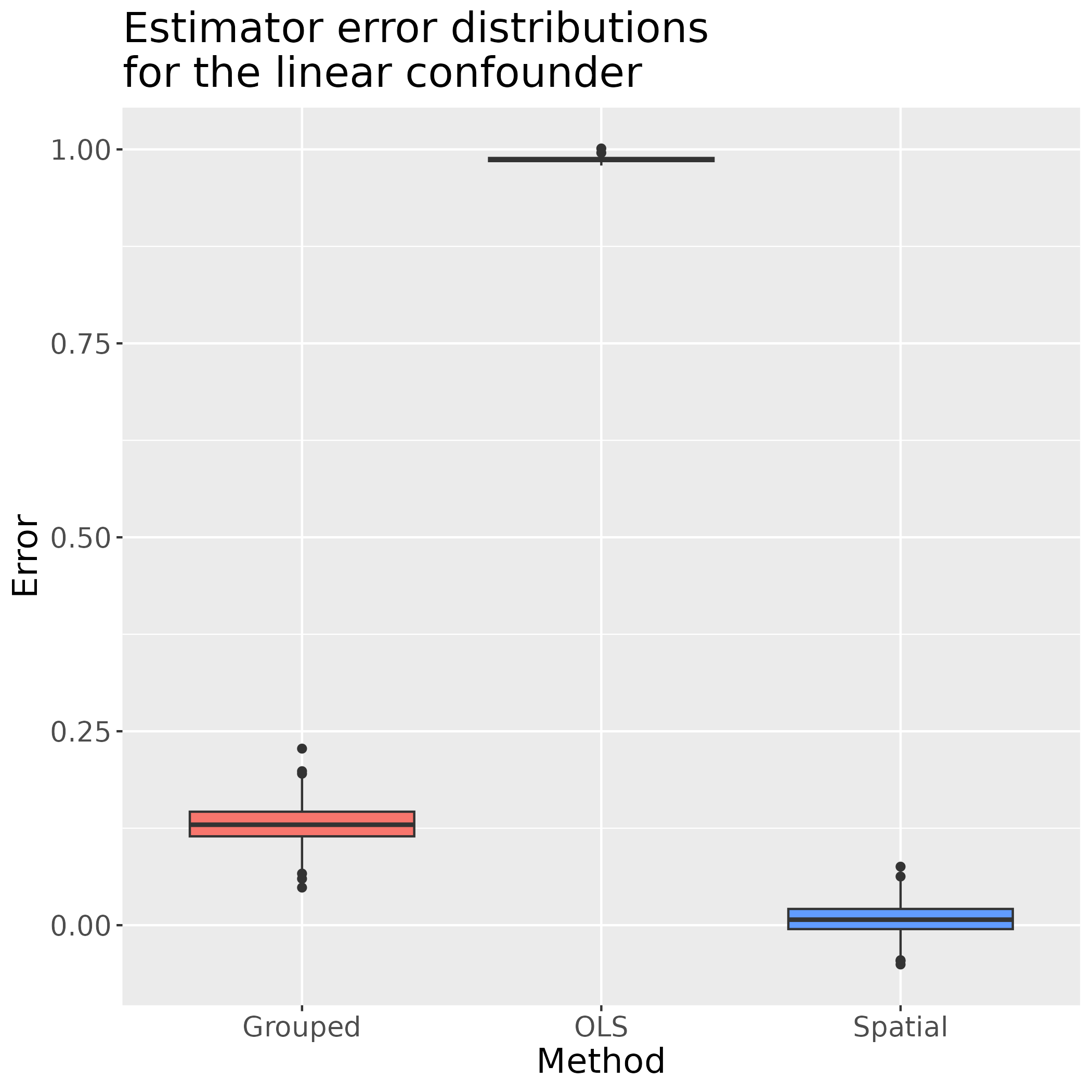}
    \caption{Estimator sampling distributions for the linear confounder. OLS is the unadjusted OLS estimator, grouped is the GLS estimator from the model (\ref{eq:glsdisc}) with independent location-specific random effects, Spatial is the GLS estimator from the model (\ref{eq:glscont}) with spatially smooth random effects modeled as a Gaussian process.}
    \label{fig:avsg}
\end{figure}

\begin{table}[ht]
    \centering
    \begin{tabular}{lrr}
        \hline
        & Grouped & Spatial \\
        \hline
        GLS Coverage & 0\% & 91\% \\
        \hline
    \end{tabular}
    \caption{GLS 95\% confidence interval coverage probabilities for both estimators for the linear confounder}
    \label{table:gls_coverage}
\end{table}

As expected, from Figure \ref{fig:avsg}, we see the OLS estimator is clearly inadequate. Although the grouped GLS estimator reduces bias, the bias is still significant (more than $10\%$ of the true effect size of $1$), and the estimator does not achieve nominal confidence interval coverage. On the other hand, although it is misspecified (does not model endogeneity explicitly), the spatial GLS estimator performs well with very little bias and nearly nominal coverage. The slight undercoverage is probably due to regularization bias, as discussed previously in Section \ref{sec:sim_bias}.

\bibliography{bib}

\begin{thebibliography}{40}
\providecommand{\natexlab}[1]{#1}
\providecommand{\url}[1]{\texttt{#1}}
\expandafter\ifx\csname urlstyle\endcsname\relax
  \providecommand{\doi}[1]{doi: #1}\else
  \providecommand{\doi}{doi: \begingroup \urlstyle{rm}\Url}\fi

\bibitem[Azevedo et~al.(2023)Azevedo, Prates, and Bandyopadhyay]{azevedo}
Douglas~RM Azevedo, Marcos~O Prates, and Dipankar Bandyopadhyay.
\newblock Alleviating spatial confounding in frailty models.
\newblock \emph{Biostatistics}, 24\penalty0 (4):\penalty0 945--961, 2023.

\bibitem[Bates et~al.(2009)Bates, Maechler, Bolker, Walker, Christensen,
  Singmann, Dai, Scheipl, and Grothendieck]{bates}
Douglas Bates, Martin Maechler, Ben Bolker, Steven Walker, Rune Haubo~Bojesen
  Christensen, Henrik Singmann, Bin Dai, Fabian Scheipl, and Gabor
  Grothendieck.
\newblock {Package ‘lme4’}.
\newblock \emph{URL http://lme4. r-forge. r-project. org}, 2009.

\bibitem[Bell et~al.(2018)Bell, Fairbrother, and Jones]{bell}
Andrew Bell, Malcolm Fairbrother, and Kelvyn Jones.
\newblock {Fixed and random effects models: making an informed choice}.
\newblock \emph{Quality \& Quantity}, 53\penalty0 (2):\penalty0 1051--1074,
  August 2018.
\newblock \doi{10.1007/s11135-018-0802-x}.
\newblock URL \url{https://doi.org/10.1007/s11135-018-0802-x}.

\bibitem[Bolin and Wallin(2024)]{bolin2024spatial}
David Bolin and Jonas Wallin.
\newblock Spatial confounding under infill asymptotics.
\newblock \emph{arXiv preprint arXiv:2403.18961}, 2024.

\bibitem[Clayton et~al.(1993)Clayton, Bernardinelli, and Montomoli]{clayton}
David~G Clayton, L~Bernardinelli, and C~Montomoli.
\newblock {Spatial correlation in ecological analysis}.
\newblock \emph{International journal of epidemiology}, 22\penalty0
  (6):\penalty0 1193--1202, 1993.

\bibitem[Datta et~al.(2016)Datta, Banerjee, Finley, and Gelfand]{nngp}
Abhirup Datta, Sudipto Banerjee, Andrew~O. Finley, and Alan~E. Gelfand.
\newblock {Hierarchical Nearest-Neighbor Gaussian Process Models for Large
  Geostatistical Datasets}.
\newblock \emph{Journal of the American Statistical Association Theory and
  Methods}, 111\penalty0 (514):\penalty0 800--812, 2016.
\newblock \doi{10.1080/01621459.2015.1044091}.
\newblock URL \url{http://dx.doi.org/10.1080/01621459.2015.1044091}.

\bibitem[Dupont et~al.(2022)Dupont, Wood, and Augustin]{dupont}
Emiko Dupont, Simon~N. Wood, and Nicole~H. Augustin.
\newblock {Spatial+: A novel approach to spatial confounding}.
\newblock \emph{Biometrics}, 78\penalty0 (4):\penalty0 1279--1290, March 2022.
\newblock \doi{10.1111/biom.13656}.
\newblock URL \url{https://doi.org/10.1111/biom.13656}.

\bibitem[Finley et~al.(2019)Finley, Datta, Cook, Morton, Andersen, and
  Banerjee]{finley2019efficient}
Andrew~O Finley, Abhirup Datta, Bruce~D Cook, Douglas~C Morton, Hans~E
  Andersen, and Sudipto Banerjee.
\newblock Efficient algorithms for bayesian nearest neighbor gaussian
  processes.
\newblock \emph{Journal of Computational and Graphical Statistics}, 28\penalty0
  (2):\penalty0 401--414, 2019.

\bibitem[Gilbert et~al.(2021)Gilbert, Datta, Casey, and Ogburn]{gilbert}
Brian Gilbert, Abhirup Datta, Joan~A Casey, and Elizabeth~L Ogburn.
\newblock A causal inference framework for spatial confounding.
\newblock \emph{arXiv preprint arXiv:2112.14946}, 2021.

\bibitem[Guan et~al.(2022)Guan, Page, Reich, Ventrucci, and Yang]{guan}
Yawen Guan, Garritt~L Page, Brian~J Reich, Massimo Ventrucci, and Shu Yang.
\newblock {Spectral adjustment for spatial confounding}.
\newblock \emph{Biometrika}, 110\penalty0 (3):\penalty0 699--719, 12 2022.
\newblock ISSN 1464-3510.
\newblock \doi{10.1093/biomet/asac069}.
\newblock URL \url{https://doi.org/10.1093/biomet/asac069}.

\bibitem[Hanks et~al.(2015)Hanks, Schliep, Hooten, and Hoeting]{hanks}
Ephraim~M. Hanks, Erin~M. Schliep, Mevin~B. Hooten, and Jennifer~A. Hoeting.
\newblock {Restricted spatial regression in practice: geostatistical models,
  confounding, and robustness under model misspecification}.
\newblock \emph{Environmetrics}, 26\penalty0 (4):\penalty0 243--254, February
  2015.
\newblock \doi{10.1002/env.2331}.
\newblock URL \url{https://doi.org/10.1002/env.2331}.

\bibitem[He and Severini(2016)]{he}
Heping He and Thomas~A. Severini.
\newblock {A flexible approach to inference in semiparametric regression models
  with correlated errors using Gaussian processes}.
\newblock \emph{Computational Statistics \& Data Analysis}, 103:\penalty0
  316--329, November 2016.
\newblock \doi{10.1016/j.csda.2016.05.010}.
\newblock URL \url{https://doi.org/10.1016/j.csda.2016.05.010}.

\bibitem[Hodges and Reich(2010)]{hodges}
James~S. Hodges and Brian~J. Reich.
\newblock {Adding Spatially-Correlated Errors Can Mess Up the Fixed Effect You
  Love}.
\newblock \emph{The American Statistician}, 64\penalty0 (4):\penalty0 325--334,
  2010.
\newblock \doi{10.1198/tast.2010.10052}.
\newblock URL \url{https://doi.org/10.1198/tast.2010.10052}.

\bibitem[Hughes and Haran(2013)]{hughes}
John Hughes and Murali Haran.
\newblock {Dimension reduction and alleviation of confounding for spatial
  generalized linear mixed models}.
\newblock \emph{Journal of the Royal Statistical Society: Series B (Statistical
  Methodology)}, 75\penalty0 (1):\penalty0 139--159, 2013.

\bibitem[Keller and Szpiro(2020)]{keller}
Joshua~P. Keller and Adam~A. Szpiro.
\newblock {Selecting a scale for spatial confounding adjustment}.
\newblock \emph{Journal of the Royal Statistical Society: Series A (Statistics
  in Society)}, 183\penalty0 (3):\penalty0 1121--1143, 2020.
\newblock \doi{https://doi.org/10.1111/rssa.12556}.
\newblock URL
  \url{https://rss.onlinelibrary.wiley.com/doi/abs/10.1111/rssa.12556}.

\bibitem[Khan and Berrett(2023)]{khan2}
Kori Khan and Candace Berrett.
\newblock {Re-thinking Spatial Confounding in Spatial Linear Mixed Models}.
\newblock \emph{{arXiv preprint arXiv:2301.05743}}, 2023.
\newblock \doi{10.48550/ARXIV.2301.05743}.
\newblock URL \url{https://arxiv.org/abs/2301.05743}.

\bibitem[Khan and Calder(2022)]{khan}
Kori Khan and Catherine~A Calder.
\newblock Restricted spatial regression methods: Implications for inference.
\newblock \emph{Journal of the American Statistical Association}, 117\penalty0
  (537):\penalty0 482--494, 2022.

\bibitem[Marques et~al.(2022)Marques, Kneib, and Klein]{marques}
Isa Marques, Thomas Kneib, and Nadja Klein.
\newblock Mitigating spatial confounding by explicitly correlating gaussian
  random fields.
\newblock \emph{Environmetrics}, 33\penalty0 (5):\penalty0 e2727, 2022.
\newblock \doi{https://doi.org/10.1002/env.2727}.
\newblock URL \url{https://onlinelibrary.wiley.com/doi/abs/10.1002/env.2727}.

\bibitem[Micchelli et~al.(2006)Micchelli, Xu, and
  Zhang]{micchelli2006universal}
Charles~A Micchelli, Yuesheng Xu, and Haizhang Zhang.
\newblock Universal kernels.
\newblock \emph{Journal of Machine Learning Research}, 7\penalty0 (12), 2006.

\bibitem[Narcisi et~al.(2024)Narcisi, Greco, and Trivisano]{narcisi2024effect}
Martina Narcisi, Fedele Greco, and Carlo Trivisano.
\newblock On the effect of confounding in linear regression models: an approach
  based on the theory of quadratic forms.
\newblock \emph{Environmental and Ecological Statistics}, 31\penalty0
  (2):\penalty0 433--461, 2024.

\bibitem[Nobre et~al.(2021)Nobre, Schmidt, and Pereira]{nobre2021effects}
Widemberg~S Nobre, Alexandra~M Schmidt, and Jo{\~a}o~BM Pereira.
\newblock On the effects of spatial confounding in hierarchical models.
\newblock \emph{International Statistical Review}, 89\penalty0 (2):\penalty0
  302--322, 2021.

\bibitem[Paciorek(2010)]{paciorek}
Christopher~J. Paciorek.
\newblock {The Importance of Scale for Spatial-Confounding Bias and Precision
  of Spatial Regression Estimators}.
\newblock \emph{Statistical Science}, 25\penalty0 (1):\penalty0 107 -- 125,
  2010.
\newblock \doi{10.1214/10-STS326}.
\newblock URL \url{https://doi.org/10.1214/10-STS326}.

\bibitem[Page et~al.(2017)Page, Liu, He, and Sun]{page}
Garritt~L. Page, Yajun Liu, Zhuoqiong He, and Donchu Sun.
\newblock {Estimation and Prediction in the Presence of Spatial Confounding for
  Spatial Linear Models}.
\newblock \emph{Scandinavian Journal of Statistics}, 44\penalty0 (3):\penalty0
  780--797, April 2017.
\newblock \doi{10.1111/sjos.12275}.
\newblock URL \url{https://doi.org/10.1111/sjos.12275}.

\bibitem[Papadogeorgou et~al.(2018)Papadogeorgou, Choirat, and
  Zigler]{papadogeorgou}
Georgia Papadogeorgou, Christine Choirat, and Corwin~M Zigler.
\newblock {Adjusting for unmeasured spatial confounding with distance adjusted
  propensity score matching}.
\newblock \emph{Biostatistics}, 20\penalty0 (2):\penalty0 256--272, January
  2018.
\newblock ISSN 1465-4644.
\newblock \doi{10.1093/biostatistics/kxx074}.
\newblock URL \url{https://doi.org/10.1093/biostatistics/kxx074}.

\bibitem[Prates et~al.(2019)Prates, Assun{\c{c}}{\~a}o, and Rodrigues]{prates}
Marcos~Oliveira Prates, Renato~Martins Assun{\c{c}}{\~a}o, and Erica~Castilho
  Rodrigues.
\newblock {Alleviating Spatial Confounding for Areal Data Problems by
  Displacing the Geographical Centroids}.
\newblock \emph{Bayesian Analysis}, 14\penalty0 (2):\penalty0 623 -- 647, 2019.
\newblock \doi{10.1214/18-BA1123}.
\newblock URL \url{https://doi.org/10.1214/18-BA1123}.

\bibitem[Rasmussen and Williams(2005)]{rasmussen}
Carl~Edward Rasmussen and Christopher K~I Williams.
\newblock \emph{Gaussian processes for machine learning}.
\newblock Adaptive Computation and Machine Learning series. MIT Press, London,
  England, November 2005.

\bibitem[Reich et~al.(2006)Reich, Hodges, and Zadnik]{reich}
Brian~J Reich, James~S Hodges, and Vesna Zadnik.
\newblock {Effects of residual smoothing on the posterior of the fixed effects
  in disease-mapping models}.
\newblock \emph{Biometrics}, 62\penalty0 (4):\penalty0 1197--1206, 2006.

\bibitem[Rice(1986)]{rice}
John Rice.
\newblock {Convergence rates for partially splined models}.
\newblock \emph{Statistics \& Probability Letters}, 4\penalty0 (4):\penalty0
  203--208, 1986.
\newblock ISSN 0167-7152.
\newblock \doi{https://doi.org/10.1016/0167-7152(86)90067-2}.
\newblock URL
  \url{https://www.sciencedirect.com/science/article/pii/0167715286900672}.

\bibitem[Saha and Datta(2018{\natexlab{a}})]{brisc}
Arkajyoti Saha and Abhirup Datta.
\newblock \emph{BRISC: Fast Inference for Large Spatial Datasets using BRISC},
  2018{\natexlab{a}}.
\newblock URL \url{https://CRAN.R-project.org/package=BRISC}.
\newblock R package version 0.1.0.

\bibitem[Saha and Datta(2018{\natexlab{b}})]{saha}
Arkajyoti Saha and Abhirup Datta.
\newblock {BRISC: Bootstrap for rapid inference on spatial covariances}.
\newblock \emph{Stat}, 7\penalty0 (1):\penalty0 e184, 2018{\natexlab{b}}.

\bibitem[Schnell and Papadogeorgou(2020)]{schnell}
Patrick~M. Schnell and Georgia Papadogeorgou.
\newblock {Mitigating unobserved spatial confounding when estimating the effect
  of supermarket access on cardiovascular disease deaths}.
\newblock \emph{The Annals of Applied Statistics}, 14\penalty0 (4):\penalty0
  2069 -- 2095, 2020.
\newblock \doi{10.1214/20-AOAS1377}.
\newblock URL \url{https://doi.org/10.1214/20-AOAS1377}.

\bibitem[Steinwart and Christmann(2008)]{steinwart2008support}
Ingo Steinwart and Andreas Christmann.
\newblock \emph{{Support Vector Machines}}.
\newblock Springer Science \& Business Media, 2008.

\bibitem[Thaden and Kneib(2018)]{thaden}
Hauke Thaden and Thomas Kneib.
\newblock {Structural Equation Models for Dealing With Spatial Confounding}.
\newblock \emph{The American Statistician}, 72\penalty0 (3):\penalty0 239--252,
  March 2018.
\newblock \doi{10.1080/00031305.2017.1305290}.
\newblock URL \url{https://doi.org/10.1080/00031305.2017.1305290}.

\bibitem[Van Der~Vaart and Van~Zanten(2011)]{vaart}
Aad Van Der~Vaart and Harry Van~Zanten.
\newblock {Information Rates of Nonparametric Gaussian Process Methods.}
\newblock \emph{Journal of Machine Learning Research}, 12\penalty0 (6), 2011.

\bibitem[Wakefield(2006)]{wakefield}
J.~Wakefield.
\newblock {Disease mapping and spatial regression with count data}.
\newblock \emph{Biostatistics}, 8\penalty0 (2):\penalty0 158--183, June 2006.
\newblock \doi{10.1093/biostatistics/kxl008}.
\newblock URL \url{https://doi.org/10.1093/biostatistics/kxl008}.

\bibitem[Wang et~al.(2020)Wang, Tuo, and Jeff~Wu]{wang2020prediction}
Wenjia Wang, Rui Tuo, and CF~Jeff~Wu.
\newblock On prediction properties of kriging: Uniform error bounds and
  robustness.
\newblock \emph{Journal of the American Statistical Association}, 115\penalty0
  (530):\penalty0 920--930, 2020.

\bibitem[Wood(2015)]{wood}
Simon Wood.
\newblock {Package ‘mgcv’}.
\newblock \emph{R package version}, 1\penalty0 (29):\penalty0 729, 2015.

\bibitem[Yang et~al.(2015)Yang, Cheng, and Dunson]{yang}
Yun Yang, Guang Cheng, and David~B. Dunson.
\newblock {Semiparametric Bernstein-von Mises Theorem: Second Order Studies}.
\newblock \emph{arXiv preprint arXiv:1503.04493}, 2015.
\newblock URL \url{https://arxiv.org/abs/1503.04493}.

\bibitem[Yuan and Lenga(2007)]{yuan2007generalization}
Jun Yuan and Gangsong Lenga.
\newblock {A generalization of the matrix form of the Brunn-Minkowski
  inequality}.
\newblock \emph{Journal of the Australian Mathematical Society}, 83\penalty0
  (1):\penalty0 125--134, 2007.

\bibitem[Zimmerman and Ver~Hoef(2022)]{zimmerman}
Dale~L Zimmerman and Jay~M Ver~Hoef.
\newblock On deconfounding spatial confounding in linear models.
\newblock \emph{The American Statistician}, 76\penalty0 (2):\penalty0 159--167,
  2022.

\end{thebibliography}

\end{document}